\newtheorem{theorem}{Theorem}[section]
\newtheorem{definition}[theorem]{Definition}
\newtheorem{observation}[theorem]{Observation}
\newtheorem{lemma}[theorem]{Lemma}
\newtheorem{corol}[theorem]{Corollary}
\newtheorem{claim}[theorem]{Claim}
\DeclareMathOperator{\ALG}{ALG}
\DeclareMathOperator{\ADV}{ADV}
\DeclareMathOperator{\OPT}{OPT}
\DeclareMathOperator{\WFA}{WFA}
\DeclareMathOperator{\cost}{cost}
\newcommand{\cF}{\ensuremath{\mathcal{F}}}
\newcommand{\cM}{\ensuremath{\mathcal{M}}}
\newcommand{\R}{\mathbb{R}}
\newcounter{note}
\begin{document}

\title{Competitive Algorithms for Generalized $k$-Server in Uniform Metrics%
\thanks{This work was supported by NWO grant 639.022.211, ERC consolidator
grant 617951, and NWO Veni project 639.021.438}}

\author{
Nikhil Bansal\thanks{TU Eindhoven, Netherlands.
\texttt{\{n.bansal,m.elias,g.koumoutsos,j.nederlof\}@tue.nl}}
\qquad
Marek Eli\'a\v{s}\footnotemark[2]
\qquad
Grigorios Koumoutsos\footnotemark[2]\\[1.2ex]
\qquad
Jesper Nederlof\footnotemark[2]\\[1.2ex]
TU Eindhoven, Netherlands}
\maketitle

\begin{abstract}
The generalized $k$-server problem is a far-reaching extension of the $k$-server problem with several applications. 
Here, each server $s_i$ lies in its own metric space $M_i$. A request is a $k$-tuple $r = (r_1,r_2,\dotsc,r_k)$ and to serve it, we need to move some server $s_i$ to the point $r_i \in M_i$, and the goal is to minimize the total distance traveled by the servers.
Despite much work, no $f(k)$-competitive algorithm is known for the problem for $k > 2$ servers, even for special cases such as uniform metrics and lines.

Here, we consider the problem in uniform metrics and give the first $f(k)$-competitive algorithms for general $k$. In particular, we obtain deterministic and randomized algorithms with competitive ratio $O(k 2^k)$  and $O(k^3 \log k)$ respectively. Our deterministic bound is based on a novel application of the polynomial method to online algorithms, and essentially matches the long-known lower bound of $2^k-1$.
We also give a $2^{2^{O(k)}}$-competitive deterministic algorithm for weighted uniform metrics, which also essentially matches the recent doubly exponential lower bound for the problem.

\end{abstract}

\thispagestyle{empty}
\newpage
\setcounter{page}{1}
\section{Introduction}

The $k$-server problem was proposed by Manasse et al.~\cite{MMS90} as a far-reaching generalization of many online problems, and its study has led to various remarkable developments~\cite{BEY98,Kou09,KP95,BBMN15}. In this problem, we are given $k$-servers $s_1,\dotsc,s_k$ located at points of a metric space $M$. At each time step a request arrives at some point of $M$ and must be served by moving some server there. The goal is to minimize the total distance traveled by the servers.

Koutsoupias and Taylor \cite{KT04} introduced a substantial generalization of the $k$-server problem, called the \textit{generalized $k$-server problem}. Here, each server $s_i$ lies in its own metric space $M_i$, with its own distance function $d_i$. A request is a $k$-tuple $r = (r_1,r_2,\dotsc,r_k)$ and must be served by moving some server $s_i$ to the point $r_i \in M_i$. 
Note that the standard $k$-server problem corresponds to the special case when all the metrics are identical, $M_1 = \ldots =M_k=M$,
and the requests are of the form $(r,r,\ldots,r)$, i.e.,~the $k$-tuple is identical in each coordinate.

The generalized $k$-server problem can model a rich class of online problems, for which the techniques developed for the standard $k$-server problem do not apply, see e.g.~\cite{KT04}. For that reason, it is widely believed that a deeper understanding of this problem should lead to powerful new techniques for designing online algorithms \cite{KT04,Sit14}. According to Koutsoupias and Taylor \cite{KT04}, this problem ``may act as a stepping stone towards building a robust (and less ad hoc) theory of online computation''. 

\subsection{Previous Work}

\paragraph{The $k$-server problem.} The $k$-server problem has been extensively studied (an excellent reference is~\cite{BEY98}).
The initial work focused on special metrics such as uniform metrics and lines, and optimum competitive ratios were obtained in many cases~\cite{CKPV91,CL91,KP96}. A particularly interesting case is that of uniform metrics, which corresponds to the very well-studied \textit{paging} problem, where tight $k$-competitive deterministic~\cite{ST85} and $O(\log k )$-competitive randomized algorithms~\cite{FKL+91,MS91,ACN00} are known.

For general metrics,
Koutsoupias and Papadimitriou~\cite{KP95} showed in a breakthrough result that the Work Function Algorithm ($\WFA$) is $(2k-1)$-competitive in any metric space. This essentially matches the lower bound of $k$ for any deterministic algorithm~\cite{MMS90}. More recently, a $\textrm{polylog}(k,n)$ randomized competitive algorithm was obtained~\cite{BBMN15} where $n$ in the number of points in $M$.

\paragraph{The generalized $k$-server problem.} This problem is much less understood. In their seminal paper, Koutsoupias and Taylor~\cite{KT04} studied the special case where $k=2$ and both the metrics $M_1$ and $M_2$ are lines. This is called CNN problem and it has attracted a lot of attention \cite{AG10,Chr03,IY04,iw01}. They showed that, even for this special case, many successful $k$-server algorithms or their natural generalizations are not competitive. 

\paragraph{Lower Bounds:} 
For uniform metrics, Koutsoupias and Taylor~\cite{KT04} showed that even when each $M_i$ contains $n=2$ points, the competitive ratio is at least $2^k - 1$. For general metrics, the best known lower bound is $2^{2^{\Omega(k)}}$~\cite{BEK17}, and comes from the weighted $k$-server problem (the weighted variant of the standard $k$-server problem). This problem corresponds to generalized-$k$-server where the metric spaces are scaled copies of each other, i.e.~$M_i = w_i M$ for some fixed $M$, and the requests have the form $(r,\ldots,r)$.

\paragraph{Upper Bounds:}
Despite considerable efforts, competitive algorithms\footnote{Those with competitive ratio $f(k)$ that only depends on $k$. Note that an $n^k-1$ competitive algorithm follows trivially, as the problem can be viewed as Metrical Service System (MSS) on $n^k$ states, where $n=\max_{i=1}^k |M_i|$.}
  are known only for the case of $k=2$ servers~\cite{SS06,Sit14,SSP03}. In a breakthrough result, Sitters and Stougie~\cite{SS06} obtained a $O(1)$-competitive algorithm for $k=2$ in any metric space.
Recently, Sitters~\cite{Sit14} showed that the generalized WFA is also $O(1)$-competitive for $k=2$ by a careful and subtle analysis of the structure of work functions. Despite this progress, no $f(k)$-competitive algorithms are known for $k>2$, even for special cases such as uniform metrics and lines.

\subsection{Our Results}

We consider the generalized $k$-server problem on uniform metrics and obtain the
first $f(k)$-competitive algorithms for general $k$, whose competitive ratios
almost match the known lower bounds.

Perhaps surprisingly, there turn out to be two very different settings for uniform metrics:
\begin{enumerate}
\item
When all the metric spaces $M_1,\ldots,M_k$ are uniform (possibly with different number of points) with identical pairwise distance, say $1$. We call this the {\em uniform metric} case.
\item When the metric spaces $M_i$ are all uniform, but have different scales, i.e.~all pairwise distances in $M_i$ are $w_i$. We call this the {\em weighted uniform metric} case.
 \end{enumerate}

Our first result is the following. 

\begin{theorem}
\label{thm:alg_det}
There is a $O(k 2^k)$ competitive deterministic algorithm for the generalized $k$-server problem in the
uniform metric case.
\end{theorem}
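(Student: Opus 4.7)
I would structure the algorithm around a phase decomposition of the request sequence and bound the within-phase work via a polynomial-method argument, giving a $k$-factor from the per-move cost and a $2^k$-factor from the number of forced moves per phase.

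\emph{Phases and the feasible set.} Initialize, at the start of each phase, the feasible set $\mathcal{F}$ to be the set of all configurations serving the first request $r^{(1)}$ of the phase. After each subsequent request $r^{(t)}$ in the phase, update
\[
\mathcal{F} \leftarrow \mathcal{F} \cap \bigl\{\,c : \exists\, i,\ c_i = r^{(t)}_i\,\bigr\}.
\]
End the phase as soon as $\mathcal{F}$ becomes empty, and start a new phase. The emptiness of $\mathcal{F}$ certifies that no single configuration serves every request of the phase, so $\OPT$ must pay at least $1$ per phase. The algorithm itself maintains its current configuration $C \in \mathcal{F}$ as an invariant; when an update ejects $C$ from $\mathcal{F}$, it moves to some $C' \in \mathcal{F}$, paying at most $k$ (one unit per coordinate that changes). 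Bounding the number of such forced moves per phase by $O(2^k)$ therefore suffices.

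\emph{Polynomial-method bound on forced moves.} Identify the points of $M_i$ with distinct elements of a sufficiently large field and, to each request $r^{(t)}$, associate the multilinear polynomial
\[
P_t(x_1,\ldots,x_k) \;=\; \prod_{i=1}^{k}\,(x_i - r^{(t)}_i).
\]
A configuration $c$ serves $r^{(t)}$ iff $P_t(c) = 0$, so at every moment $\mathcal{F}$ equals the common zero set of the polynomials $P_t$ accumulated in the current phase. All $P_t$'s lie in the space of multilinear polynomials in $k$ variables, which has dimension $2^k$. If an arriving $P_t$ is a linear combination of the previous $P_1,\ldots,P_{t-1}$, then $\sum_s \alpha_s P_s(c) = 0$ for all $c \in \mathcal{F}$, so $P_t$ vanishes on $\mathcal{F}$ and does not shrink it, and in particular cannot eject $C$. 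Hence each forced move must correspond to an update at which the new $P_t$ is linearly independent from the earlier ones, and there can be at most $2^k$ such updates per phase. Multiplying by the per-move cost of $k$ and charging against $\OPT \ge 1$ per phase yields the claimed $O(k 2^k)$ competitive ratio, essentially matching the $2^k - 1$ lower bound.

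\emph{Main obstacle.} The conceptually clean step above hides two points that I expect to require care. First, one must verify that whenever the algorithm is ejected it can always be repositioned to some $C' \in \mathcal{F}$, and that the most economical choice of $C'$ indeed costs $O(k)$; this is immediate from the uniform metric (any two configurations are at Hamming distance at most $k$), but it interacts with the phase-transition bookkeeping, where the algorithm may also spend $O(1)$ reaching a configuration that serves the first request of the new phase --- a cost easily absorbed into $O(k 2^k)$. Second, and more importantly, the argument must handle the case where an incoming $P_t$ is linearly independent but still does not shrink $\mathcal{F}$ (so no move is forced), as well as the case where $\mathcal{F}$ shrinks but does not lose $C$; both are fine for the bound since ``forced moves'' $\le$ ``$\mathcal{F}$ strictly shrinks'' $\le$ ``LI additions'' $\le 2^k$, but one must be precise that the chain of inequalities goes through. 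Matching the leading $2^k$ constant would then already be tight up to the $O(k)$ factor arising from the per-move cost, in line with the lower bound.
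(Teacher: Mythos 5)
Your proposal is correct and follows essentially the same route as the paper: the same phase decomposition with $\OPT\geq 1$ per completed phase, the same polynomial $\prod_{i}(x_i-r_i)$ encoding feasibility, and the same $2^k$ bound coming from the dimension of the space of multilinear monomials. The only (presentational) difference is that you bound the number of strict shrinks of $\mathcal{F}$ via linear independence of the request polynomials, whereas the paper phrases the identical fact as a rank bound on the upper-triangular states-versus-requests evaluation matrix --- the reformulation in terms of strict decreases of the feasible set is exactly the one the paper itself notes in the closing remark of Section~2.
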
 
This almost matches the $2^k-1$ lower bound due to \cite{KT04} (we describe this instructive and simple lower bound instance in the Appendix for completeness).

The proof of Theorem \ref{thm:alg_det} is based on a general combinatorial argument about how the set of feasible states evolves
as requests arrive. Specifically, we divide the execution of the algorithm in phases, and consider the beginning of a phase when all the MSS states are feasible (e.g. the cost is $0$ and not $\infty$). As requests arrive, the set of states that remain valid for all requests during this phase can only reduce. In particular, for this problem we show that any sequence of requests that causes the feasible state space to strictly reduce at each step, can have length at most $2^k$ until all states becomes infeasible.

Interestingly, this argument is based on a novel application of the polynomial or the rank method from linear algebra \cite{Jukna,Matousek,Guth}.
While the rank method has led to some spectacular recent successes in combinatorics and computer science \cite{KakeyaDvir,Capset}, we are not aware of any previous applications to online algorithms. 
We feel our approach could be useful for other online problems that can be modeled as Metrical Service Systems by analyzing the combinatorial structure in a similar way.

\vspace{2mm}

Next, we consider randomized algorithms against oblivious adversaries. 
\begin{theorem}
\label{thm:alg_rand}
There is a randomized algorithm for the generalized $k$-server problem on
uniform metrics with competitive ratio $O(k^3 \log k)$.
\end{theorem}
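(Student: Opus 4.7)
My plan is to follow a fractional-to-integral strategy that leverages the phase structure underlying Theorem~\ref{thm:alg_det}. First I would partition the request sequence into phases, where each phase ends at the first request that empties the current set of feasible configurations; then $\OPT$ is forced to pay at least $1$ during the phase. Since the deterministic analysis bounds phase length by $2^k$, it suffices to produce a randomized online algorithm whose expected cost per phase is $O(k^3\log k)$.

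Inside each phase I would work with the natural LP relaxation, with variables $z_{i,p}\in[0,1]$ representing the marginal probability that server $i$ is at point $p\in M_i$, so $\sum_p z_{i,p}=1$, and a request $r=(r_1,\dots,r_k)$ imposing $\sum_i z_{i,r_i}\geq 1$. The fractional cost counts $\tfrac12\sum_{i,p}|\Delta z_{i,p}|$. When a request is violated I would continuously shift mass onto the $r_i$ coordinates using a multiplicative update driven by a log-barrier potential, in the style of the primal-dual algorithms for online paging and for MTS on HSTs. A weak-duality argument against the LP dual should give a fractional competitive ratio of roughly $O(k^2\log k)$ per phase; one factor of $k$ comes from the per-step fractional movement being spread over up to $k$ coordinates, and the $\log k$ from the barrier potential.

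The remaining step is to round the fractional solution online into an integral randomized one while losing at most an extra factor of $O(k)$. For each server $i$ I would maintain an integral position coupled to the marginal $z_{i,\cdot}$ via a standard marginal-preserving coupling on uniform metrics, which costs at most a constant factor more than the fractional movement. Because the fractional constraint $\sum_i z_{i,r_i}\geq 1$ does not force any particular coordinate to cover the request integrally, occasional repair moves are needed to make the sampled configuration feasible; these repairs can be charged to the fractional mass present on the $r_i$ coordinates at the moment of the request, accounting for the final factor of $k$.

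The main obstacle will be the second step: designing the right potential for the fractional algorithm. Unlike paging, where every request is an AND-type constraint (every requested page must be in the cache), here a request is satisfied as soon as \emph{any one} of the $k$ coordinates covers it, so the algorithm must decide how to spread the pressure of a violated request across the $k$ servers. I expect to handle this by introducing a per-request dual variable that redistributes this pressure across the $k$ choices and then bounding the resulting potential increase against the lower bound on $\OPT$ extracted from the dual LP within each phase; showing that this dual can be set greedily online, without sacrificing the phase-based amortization, is the delicate point.
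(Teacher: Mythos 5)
Your plan is a genuinely different route from the paper's, but it contains a gap that I believe is fatal in the form stated. The LP relaxation you propose, with marginals $z_{i,p}$ and covering constraints $\sum_i z_{i,r_i}\geq 1$, has an unbounded integrality gap precisely at the scale of a phase. Take $k$ two-point metrics and let the phase consist of all $2^k$ requests in $\{0,1\}^k$: no integral configuration satisfies them all (which is what ends the phase and forces $\OPT$ to pay $1$), yet the static fractional point $z_{i,0}=z_{i,1}=1/2$ satisfies every constraint with equality and incurs zero fractional movement. Consequently the LP dual for the phase has value $0$, and the weak-duality step you rely on cannot certify any lower bound on $\OPT$; the claimed $O(k^2\log k)$ fractional ratio has nothing to be measured against. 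The rounding step inherits the same problem: charging repair moves to ``the fractional mass on the $r_i$ coordinates'' fails because that mass can satisfy the constraint fractionally while the sampled integral configuration misses the request; in the example above a repair is needed with constant probability at essentially every step, and a phase can last $2^k$ steps, so the rounded algorithm can pay $\Omega(2^k)$ per phase. This is exactly the obstruction the paper spells out at the start of Section~\ref{sec:alg_rand}: any randomized scheme that tracks only how much feasible ``mass'' survives, without finer structural information, cannot beat $\Omega(2^k)$.

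The paper circumvents this by replacing marginals with an explicit combinatorial representation of the feasible set as a union of axis-aligned subcubes (spaces with some coordinates pinned): Lemma~\ref{lem:hydra} shows each violated $d$-dimensional subcube splits into at most $d$ subcubes of dimension $d-1$, Lemma~\ref{lem:Ft_size} bounds the total number of $d$-dimensional subcubes ever created by $k!/d!$, and the algorithm randomizes uniformly over the maximal-dimension subcubes, paying via a harmonic-number potential whose initial value is $O(k^2\log k)$ and losing one factor of $k$ for the per-move cost. If you want to salvage an LP/primal-dual approach, you would need to strengthen the relaxation (or the dual) so that it ``sees'' the joint infeasibility of a phase's requests rather than only the per-request marginal constraints; as written, the per-request dual variables you propose cannot recover the phase lower bound.
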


The rank method above does not seem to be useful in the randomized setting as it only bounds the number of requests until the set of feasible states becomes empty, and does not give any structural information about how the set of states evolves over time.
As we observe in Section \ref{sec:alg_rand}, a $o(2^k)$ guarantee cannot be obtained without using such structural information. So we explore the properties of this evolution more carefully and use it to design the randomized algorithm in Theorem \ref{thm:alg_rand}.
 
In the Appendix, we also give a related lower bound. In particular, we note that an $\Omega(k/\ln^2 k)$ lower bound on the competitive ratio of any randomized algorithm follows directly by combining the  lower bound instance of \cite{KT04} with the results of \cite{BBM06}. 

\vspace{2mm}

Finally, we consider the weighted uniform metric case.

\begin{theorem}
\label{thm:weight_comp}
There is a $2^{2^{k+3}}$ competitive algorithm for generalized $k$-server on weighted uniform metrics.  
\end{theorem}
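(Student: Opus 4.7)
I would prove the theorem by induction on the number of servers $k$, designing a recursive algorithm $A_k$ that invokes $A_{k-1}$ as a subroutine. The target recursion is $f(k) \le c \cdot f(k-1)^2$ for some absolute constant $c$, which, with a constant base case $f(1) = O(1)$ (a single server in a uniform metric is trivial), yields $f(k) = 2^{2^{O(k)}}$, matching $2^{2^{k+3}}$ for suitable constants.

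\textbf{Structure of $A_k$.} Reorder the servers so that $w_1 \le w_2 \le \dots \le w_k$ and treat the heaviest server $s_k$ specially. The algorithm proceeds in \emph{phases}, during each of which $s_k$ is pinned at some position $p \in M_k$. An arriving request $r = (r_1,\dots,r_k)$ is handled as follows: if $r_k = p$ it is already served by $s_k$; otherwise it must be satisfied by moving some $s_i$ with $i<k$, which is a generalized $(k-1)$-server instance on $M_1,\dots,M_{k-1}$ with request $(r_1,\dots,r_{k-1})$, to which we delegate $A_{k-1}$. The phase ends the first moment the cumulative $A_{k-1}$ cost inside this phase exceeds a threshold $T = f(k-1)\cdot w_k$. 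At that point we move $s_k$ to a new position $p'$ (paying $w_k$), reset the internal state of $A_{k-1}$, and begin a new phase. Each completed phase therefore incurs total cost $O(f(k-1)\cdot w_k)$.

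\textbf{Analysis and choice of $p'$.} Decompose time into \emph{OPT-windows} induced by OPT's moves of its own $s_k$; within a window, OPT keeps $s_k$ at some $p^\ast$, so OPT's cost in the window lower-bounds $\OPT_{k-1}^{p^\ast}(\sigma)$, the offline $(k-1)$-server cost on the subsequence with $r_k\ne p^\ast$. Conversely, by the inductive guarantee on $A_{k-1}$, our phase cost is at most $f(k-1)\cdot \OPT_{k-1}^{p}(\sigma_{\text{phase}})$ where $p$ is our position during the phase. The key lemma to establish is that, up to an additive charge that we can absorb against OPT's moves of $s_k$ (each worth $w_k$), the quantity $\OPT_{k-1}^{p}$ on our subsequence is within a constant factor of $\OPT_{k-1}^{p^\ast}$ on the same subsequence; this is where $p'$ must be chosen adaptively. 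I would choose $p'$ by running a separate online algorithm (e.g., the work function algorithm or a weighted voting rule) on the meta-instance in $M_k$, using for each candidate position $q$ the phase cost we would have paid had $s_k$ been pinned at $q$. This should give that inside each OPT-window we complete at most $O(f(k-1))$ of our phases, so the total number of phases is $O(f(k-1))\cdot(N_k^\ast+1)$ where $N_k^\ast$ counts OPT's $s_k$-moves, yielding total cost $O(f(k-1)^2)\cdot \OPT$.

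\textbf{Main obstacle.} The technical crux is bridging the mismatch between our sub-problem ($r_k \ne p$) and OPT's sub-problem ($r_k \ne p^\ast$): the two disagree precisely on requests with $r_k \in \{p, p^\ast\}$, and a naive comparison loses an unbounded factor. Handling this, via a potential that charges the discrepancy to movements of $s_k$ (ours or OPT's) combined with the adaptive meta-algorithm for $p'$, is the main difficulty and is exactly what forces the squaring $f(k)\le c\cdot f(k-1)^2$ rather than a linear recursion. Once that potential argument is in place, the doubly exponential bound follows by unrolling the recursion.
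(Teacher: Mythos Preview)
Your high-level recursive structure---pinning the heaviest server, delegating to $A_{k-1}$, and ending a phase when the delegated cost reaches a threshold proportional to $w_k$---matches the paper's algorithm (which follows Fiat and Ricklin). The gap is in your resolution of the ``main obstacle.'' Choosing $p'$ via an online meta-algorithm on $M_k$ (WFA, weighted majority, etc.) competes, on its face, against $n=|M_k|$ candidate positions, so any deterministic guarantee is $\Omega(n)$ rather than $f(k)$; you give no argument for why the structure of the costs collapses the effective number of experts to something depending only on $k$. And the lemma you hope to establish---that $\OPT_{k-1}^{p}$ is within a constant factor of $\OPT_{k-1}^{p^\ast}$ up to charges against $s_k$-moves---is false as stated: if most requests in a window have $r_k=p^\ast$, then $\OPT_{k-1}^{p^\ast}$ can be zero while $\OPT_{k-1}^{p}$ is arbitrarily large, with no $s_k$-moves anywhere to charge against. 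So the ``potential that charges the discrepancy'' is doing all the work, and you have not said what it is.

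The paper closes the induction with two ideas you are missing. First, a \emph{learn-then-explore} rule for $s_i$: the first subphase of a level-$i$ phase only counts how often each point of $M_i$ is requested; in the remaining $c(i)$ subphases, $s_i$ visits the $c(i)$ most-requested points, one per subphase. This yields the purely combinatorial fact that for \emph{every} $p\in M_i$ there is some subphase in which at most a $1/c(i)$ fraction of requests have $r_i=p$ (either $p$ was unpopular in the learning subphase, or $s_i$ later sits on $p$ and those requests disappear). Second, the inductive hypothesis is \emph{strengthened}: $\ALG_i$ is shown to be $R_i$-competitive even against an adversary that may ignore any $4/c(i+1)$ fraction of the requests and choose any initial configuration. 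When $\ADV$ parks $s_i$ at $p^\ast$, the requests with $r_i=p^\ast$ become precisely the ignored fraction for the level-$(i{-}1)$ comparison, and the combinatorial lemma guarantees a subphase where that fraction is small enough to invoke the induction. This ignored-fraction device, not a potential argument or a meta-experts algorithm, is what makes the recursion close and keeps the bound free of $n$.
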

Theorem \ref{thm:weight_comp} follows by observing that a natural modification of an algorithm due to Fiat and Ricklin~\cite{FR94} for weighted $k$-server on uniform metrics also works for the more general generalized $k$-server setting.
Our proof is essentially the same as that of \cite{FR94}, with some arguments streamlined and an improved competitive ratio\footnote{It was first pointed out to us by Chiplunkar~\cite{Chip-pc} that the competitive ratio $2^{2^{4k}}$ claimed in~\cite{FR94} can be improved to $2^{2^{k+O(1)}}$.}.
Finally, note that the $2^{2^{\Omega(k)}}$ lower bound \cite{BEK17} for weighted $k$-server on uniform metrics implies that Theorem \ref{thm:weight_comp} is essentially optimal.

\section{Deterministic algorithm for uniform metrics}
\label{sec:alg_det}
In this section we prove Theorem \ref{thm:alg_det}. Recall that each $M_i$ is the uniform metric with unit distance. We assume that all metrics have $n = \max_{i=1}^{k} |M_i|$ points (if for some metric $|M_i| < n$, we can add some extra points that are never requested).
We use $[n]$ to denote $\{1,\ldots,n\}$. As the requests are arbitrary $k$-tuples and each metric $M_i$ is uniform, we can relabel the points arbitrarily and hence assume that the set of points in each $M_i$ is $[n]$.
At any time $t$, the state of an algorithm can be described by the $k$-tuple $q^t = (q^t_1,\ldots,q^t_k)$ where for each $i\in [k]$, $q^t_i \in [n]$ denotes the location of server $i$. Let $r^t = (r^t_1,\ldots,r^t_k)$ denote the request vector at time $t$. We need to find a state with the following property:
\begin{definition}
	A state $q^t$ {\em satisfies (or is feasible for)} the request $r^t$ if $q^t_i = r^t_i$ for some $i\in [k]$.
\end{definition}
Moreover, if the state changes from $q^t$ to $q^{t+1}$, the algorithm pays the Hamming distance \[d(q^{t+1},q^t) = |\{i: q^{t+1}_i \neq q^t_i\}|,\] between $q^{t}$ and $q^{t+1}$.

We describe a generic algorithm below that works in phases in Algorithm~\ref{alg:det1}. We will show that during each phase the offline moves at least once and hence pays at least $1$, while the online algorithm changes its state at most $2^k$ times and hence pays at most $k 2^k$ as the Hamming distance between any two states is at most $k$. This will be sufficient as the offline optimum will need to change its state at least once as no state satisfies all requests, and it follows that our algorithm pays at most $(c^{\ast}+1)k2^k$, where $c^{\ast}$ denotes the optimal cost. Here the $+1$ accounts for the last (possible unfinished) phase.

\begin{algorithm2e}
	\DontPrintSemicolon
	If a phase begins, the algorithm starts in some arbitrary $q^1$.\;
	At each time $t$ when a request $r^t$ arrives do the following.\;
	\eIf{\normalfont the current state $q^t$ does not satisfy the current request $r^t$}{
		\eIf{\normalfont there exists a state $q$ that satisfies all requests $r^1,\ldots,r^{t}$}{
			Set $q^{t+1}=q$. 
		}
		{
			Set $q^{t+1}$ to be an arbitrary location satisfying (only) $r^t$.\;
			End the current phase. \;			
		}
	}{
			Set $q^{t+1}=q^t$.\;
	}
	\caption{A deterministic $O(k2^k)$ competitive algorithm.}
	\label{alg:det1}
\end{algorithm2e}
	
We call this algorithm {\em generic} as it can pick any arbitrary point $q$ as long as it is feasible for $r^1,\ldots,r^t$. 
Note that this algorithm captures a wide variety of natural algorithms including (variants) of the Work Function Algorithm.

Fix some phase that we wish to analyze, and let $\ell$ denote its length.
Without loss of generality, we can assume that $r^t$ always causes $q^t$ to move (removing such requests does not reduce the online cost, and can only help the offline adversary). So the online algorithm moves exactly $\ell$ times.
Moreover, the adversary must move at least once during the phase as no location exists that satisfies all the requests $r^1,\ldots,r^{\ell}$ that arrive during the phase.

It suffices to show the following.
\begin{theorem}
For any phase as defined above, its length satisfies $\ell \leq 2^k$.
\end{theorem}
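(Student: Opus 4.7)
The plan is to use the polynomial method. To each request $r = (r_1, \ldots, r_k)$, I would associate the multilinear polynomial
\[
P_r(x_1, \ldots, x_k) \;=\; \prod_{i=1}^{k} (x_i - r_i) \;\in\; \mathbb{R}[x_1, \ldots, x_k].
\]
The key observation is that for every state $q \in [n]^k$, $P_r(q) = 0$ if and only if $q_i = r_i$ for some $i$, i.e., if and only if $q$ is feasible for $r$. Consequently, the set $S^t$ of states feasible for $r^1, \ldots, r^t$ is precisely the common zero locus in $[n]^k$ of $P_{r^1}, \ldots, P_{r^t}$.

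First I would record that during the phase, the feasible set strictly shrinks: since the algorithm only moves when $q^t \notin S^t$ and we may discard requests that do not force a move, at each step $t \leq \ell$ we have $q^t \in S^{t-1} \setminus S^t$, so $S^t \subsetneq S^{t-1}$. Next I would show that the polynomials $P_{r^1}, \ldots, P_{r^\ell}$ are linearly independent in $\mathbb{R}[x_1, \ldots, x_k]$. Suppose for contradiction that at some step $t$, $P_{r^t}$ lies in the span of $P_{r^1}, \ldots, P_{r^{t-1}}$. Then any point $q$ on which all of $P_{r^1}, \ldots, P_{r^{t-1}}$ vanish would also satisfy $P_{r^t}(q) = 0$, forcing $S^{t-1} \subseteq S^t$ and contradicting strict containment. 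Hence each $P_{r^t}$ is independent of the earlier ones, giving linear independence of the whole list.

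To conclude, I would invoke the fact that the space of multilinear polynomials in $k$ variables has dimension exactly $2^k$ (a basis being $\prod_{i \in S} x_i$ for $S \subseteq [k]$). Since each $P_{r^j}$ is multilinear, the $\ell$-term independent family fits inside this $2^k$-dimensional space, which yields $\ell \leq 2^k$, as required.

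The step I expect to be most delicate is the justification of strict containment $S^t \subsetneq S^{t-1}$ at every step, because this is what drives linear independence. It requires carefully reconciling (i) the WLOG assumption that every request forces the algorithm to move, (ii) the invariant that $q^t$ is chosen to satisfy all previous requests, and (iii) the termination rule for a phase. All the other ingredients, namely the identification of feasibility with vanishing of $P_r$, and the dimension bound for multilinear polynomials, are immediate once the right polynomial is written down; the cleverness of the argument sits entirely in choosing $P_r = \prod_i (x_i - r_i)$ as the right algebraic encoding of a request.
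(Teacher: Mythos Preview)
Your proposal is correct and takes essentially the same approach as the paper: both hinge on the polynomial $P_r(x)=\prod_i(x_i-r_i)$ and the fact that such polynomials live in the $2^k$-dimensional space of multilinear polynomials in $x_1,\ldots,x_k$. The only cosmetic difference is that the paper packages the independence argument as a matrix-rank computation---forming the $\ell\times\ell$ matrix $M[t,t']=P_{r^{t'}}(q^t)$, observing it is upper triangular with nonzero diagonal (exactly your witness $q^t\in S^{t-1}\setminus S^t$), and then factoring $M=AB$ through $\mathbb{R}^{2^k}$---whereas you argue linear independence of the $P_{r^t}$ directly; the two formulations are interchangeable.
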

\begin{proof}
We use the rank method. 
Let $x=(x_1,\ldots,x_k), y=(y_1,\ldots,y_k)$ be points in $\R^k$, and
consider the $2k$-variate degree $k$ polynomial $p:\R^{2k} \rightarrow \R$, 
\[ p(x,y) :=  \prod_{i\in [k]}  (x_i-y_i). \]
The key property of $p$ is that a state $q \in [n]^k$ satisfies a request $r \in [n]^k$ iff  $p(q,r)=0$.

We now construct a matrix $M$ that captures the dynamics of the online algorithm during a phase.
Let $M \in \R^{\ell \times \ell}$ be an $\ell \times \ell$ matrix,
where columns correspond to the states and rows to the requests,
with entries $M[t,t'] = p(q^t,r^{t'})$, i.e.,~the $[t,t']$ entry of 
$M$ corresponds to the evaluation of $p$ on $q^t$ and $r^{t'}$
\begin{claim}
$M$ is an upper triangular matrix with non-zero diagonal.
\end{claim}
\begin{proof}
At any time $t=1,\ldots,\ell$, as the current state $q^t$ does not satisfy the request $r^t$, it must be that $p(q^t,r^t) \neq 0$.

On the other hand, for $t=2,\ldots,\ell$, the state $q^t$ was chosen such that it satisfied all the previous requests $t'$ for $t'<t$.
This gives that  
$M[t,t'] =0$ for $t'<t$ and hence all the entries below the diagonal are $0$.
\end{proof}

As the determinant of any upper-triangular matrix  is the product of its diagonal entries, this implies that $M$ has non-zero determinant and has full rank, $\textrm{rk}(M) =\ell$.

However, we can use the structure of $p$ to show that the rank of $M$ is at most $2^k$ in a fairly straight manner\footnote{Curiously, this particular rank upper bound was used in a previous work for answering a question the a completely different setting about the parameterized complexity of graph coloring parameterized by cutwidth~\cite{Geffen17}.}.
In particular, we  give an explicit factorization of $M$ as $M=AB$, where $A$ is $\ell \times 2^k$ matrix and
$M$ is a $2^k \times \ell$ matrix. 
Clearly, as any $m \times n$ matrix has rank at most $\min(m,n)$, both $A$ and
$B$ have rank at most $2^k$.
Moreover, as $\textrm{rk}(AB) \leq \min(\textrm{rk}(A),\textrm{rk}(B))$, this implies $\textrm{rk}(M) \leq 2^k$. It remains to  show the factorization.

Indeed, if we express $p(x,y)$ in terms of its $2^k$ monomials, we can write 
\[p(x,y) = \sum_{S \subseteq [k]} (-1)^{k-|S|} X_S Y_{[k]\setminus S}, \]
where $X_{S} = \prod_{i\in S} x_i$ with $X_{\emptyset}=1$, and $Y_{S}$ is defined analogously. 

Now, let $A$ be the $\ell \times 2^k$ matrix with rows indexed by time $t$ and
columns by subsets $S \in 2^{[k]}$, with the entries
\[ A[t,S] = q^t_S:= \prod_{i\in S} q^t_i.\]
Similarly, let $B$ be the $ 2^k \times \ell$ matrix with rows indexed by subsets
$S \in 2^{[k]}$ and columns indexed by time $t'$. We define 
\[B[S,t'] = (-1)^{k-|S|}\, r^{t'}_{[k]\setminus S}  := (-1)^{k-|S|} \prod_{i \in [k]\setminus S} r^{t'}_i.\]

Then, for any $t,t'\in [\ell]$,
\[
M [t,t'] = p(q^t,r^{t'}) = \sum_{S \subseteq [k]} (-1)^{k-|S|}\; q^t_S\, r^{t'}_{[k]\setminus S} = \sum_{S \subseteq [k]} A[t,S]B[S,t'] = (AB)[t,t'].\]
and hence $M=AB$ as claimed.
\end{proof}

We remark that an alternate way to view this result is that the length
of any request sequence that causes the set of feasible states
to strictly decrease at each step
can be at most $2^k$.

\section{Randomized algorithm for uniform metrics}
\label{sec:alg_rand}
A natural way to randomize the algorithm above would be to pick a state uniformly at random among all the 
 states that are feasible for all the requests thus far in the current phase.
The standard randomized uniform MTS analysis \cite{BLS92} implies that this online algorithm would move
$O(\log (n^k)) = O(k \log n)$ times. However, this guarantee is not useful if $n \gg \exp(\exp(k))$.

Perhaps surprisingly, even if we use the fact from Section \ref{sec:alg_det} that the set of feasible states can shrink at most $2^k$ times, this does not suffice to give a randomized $o(2^k)$ guarantee. 
Indeed, consider the algorithm that picks a random state among the feasible ones in the current phase.
If, at each step $t=1,\ldots,\ell$, half of the feasible states become infeasible (expect the last step when
all states become infeasible), then the algorithm must move with probability at
least $1/2$ at each step, and hence incur an expected $\Omega(\ell) = \Omega(2^k)$ cost during the phase.

So proving a better guarantee would require showing that the scenario above cannot happen. In particular, we need a more precise understanding of how
the set of feasible states evolves over time, rather than simply a bound on the number of requests in a phase.

To this end, in Lemmas~\ref{lem:hydra} and~\ref{lem:Ft_size} below, we impose some stronger subspace-like structure over the set of
feasible states. Then, we use this structure to design a variant of the natural randomized algorithm above, that directly works with these subspaces.

\paragraph{Spaces of configurations.} Let $U_i$ denote the set of points in $M_i$. We can think of $U_i=[n]$, but $U_i$ makes the notation clear. We call state in $\prod_{i=1}^k U_i = [n]^k$ a configuration. Here we slightly abuse notation by letting $\prod$ denote the generalized Cartesian product.
It will be useful to consider sets of configurations where some server locations are
fixed at some particular location. For a vector
$v\in \prod_{i=1}^k (U_i\cup \{*\})$, we define the space
\[S(v) := \left\{c\in \prod_{i=1}^k U_i \,\middle|\;
	c_i = v_i \: \forall i \text{ s.t. } v_i \neq *\right\}.\]
A coordinate $i$ with $v_i = *$ is called {\em free} and the corresponding server can
be located at an arbitrary point of $U_i$. The number of free coordinates in the
space $S(v)$ we call dimension and denote it with $\dim(S(v))$.

Let us consider a $d$-dimensional space $S$ and a request $r$ such that some
configuration $c \in S$ is not feasible for $r$.
Then, we claim that a vast majority of configurations from $S$ are
infeasible for $r$, as stated in the following lemma.
We denote $F(r)$ the set of configuration satisfying $r$.

\begin{lemma}\label{lem:hydra}
Let $S$ be a $d$-dimensional space and let $r$ be a request which
makes some configuration $c\in S$ infeasible.
Then, there exist $d$ subspaces $S_1, \dotsc, S_d$, each of dimension $d-1$,
such that we have $S \cap F(r) = S_1 \cup \dotsb \cup S_d$.
\end{lemma}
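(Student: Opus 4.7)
The plan is to exploit the structure of the space $S=S(v)$: it is completely determined by the choice of free coordinates and the values assigned to the fixed ones. The first key observation I would make is a \emph{necessary condition on $v$}: if some configuration $c \in S$ fails to satisfy $r$, then on every fixed coordinate $i$ (i.e.\ $v_i \neq *$) we must have $v_i \neq r_i$. For if some fixed coordinate had $v_i = r_i$, then every configuration in $S$ would be forced to have $c_i = v_i = r_i$ and hence would satisfy $r$, contradicting the existence of the infeasible $c$.

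Once this is established, the characterization of $S \cap F(r)$ becomes immediate: a configuration $c \in S$ satisfies $r$ if and only if $c_i = r_i$ for some \emph{free} coordinate $i$ of $S$ (the fixed coordinates cannot contribute, since $v_i \neq r_i$ there). So the plan is to let $i_1,\dotsc,i_d$ be the free coordinates of $S$, and for each $j=1,\dotsc,d$ define $S_j := S(v^{(j)})$, where $v^{(j)}$ is obtained from $v$ by replacing the $*$ at position $i_j$ with $r_{i_j}$. Each $v^{(j)}$ has exactly $d-1$ free coordinates, so $\dim(S_j)=d-1$, and by construction $S_j\subseteq S$.

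To finish, I would verify the two inclusions of the claimed identity $S\cap F(r)=S_1\cup\dotsb\cup S_d$. The inclusion $\bigcup_j S_j \subseteq S \cap F(r)$ is direct: every $c\in S_j$ lies in $S$ and satisfies $r$ at coordinate $i_j$. For the reverse inclusion, pick any $c\in S\cap F(r)$; by the characterization above, $c_{i_j}=r_{i_j}$ for some free coordinate $i_j$; combined with the fact that $c$ already agrees with $v$ on every fixed coordinate (because $c\in S$), this shows $c$ agrees with $v^{(j)}$ on all non-$*$ positions and hence $c\in S_j$.

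There is really no substantive obstacle here once the opening observation is made; the whole lemma reduces to carefully unpacking the definition of $S(v)$ and of feasibility. The only thing to be slightly careful about is the corner case $d=0$: then $S$ consists of a single configuration fully determined by $v$, the hypothesis says this configuration is infeasible, and there are no free coordinates, so the union over $j$ is empty and equals $S\cap F(r)=\emptyset$, which is consistent with the statement.
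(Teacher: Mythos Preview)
Your proof is correct and follows essentially the same approach as the paper: you make the same key observation that the fixed coordinates of $S$ must all differ from the corresponding entries of $r$, and then define the $S_j$ by fixing one free coordinate at a time to $r_{i_j}$, exactly as the paper does (it just reorders coordinates first for notational convenience). Your treatment is in fact slightly more careful, since you also address the $d=0$ corner case explicitly.
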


Note that if all the metric spaces $U_i$ contain $n$ points, then
$|S_i| = \frac1n |S|$ for each $i = 1, \dotsc, d$.

\begin{proof}
By reordering the coordinates, we can assume that the first $d$ coordinates of
$S$ are free and $S$ corresponds to the vector
$(*, \dotsc, *, s_{d+1}, \dotsc, s_k)$, for some
$s_{d+1}, \dotsc, s_k$.
Let $r=(r_1,\ldots,r_k)$.

Consider the  subspaces $S(v_1), \dotsc, S(v_d)$, where
\[ v_1 = (r_1, *, \dotsc, *, s_{d+1}, \dotsc, s_k),
	\dotsc, v_d = (*, \dotsc, *, r_d, s_{d+1}, \dotsc, s_k).
\]
Clearly, any configuration contained in  
$S(v_1) \cup \dotsc \cup S(v_d)$,
is feasible for $r$.
Conversely, as there exists $c\in S$ infeasible for $r$, we have
$s_i=c_i\neq r_i$ for each $i=d+1, \dotsc, k$.
This already implies that each configuration from $S$ feasible for $r$
must belong to $S(v_1) \cup \dotsc \cup S(v_d)$:
whenever $c'\in S$ is feasible for $r$, it needs to have
$c'_i = r_i$ for some $i\in \{1, \dotsc, d\}$ and therefore
$c'\in S(v_i)$.
\end{proof}

\paragraph{Spaces of feasible configurations.}
During each phase, we maintain a set $\cF^t$ of spaces containing
configurations which were feasible with respect to the requests
$r^1, \dotsc, r^t$.
In the beginning of the phase,
we set $\cF^1 = \{(r^1_1,*,\dotsc,*), \dotsc, (*,\dotsc,*,r^1_k)\}$,
and, at time $t$, we update it in the following way.
We remove all spaces of dimension $0$ whose single configuration is infeasible
w.r.t. $r^t$.
In addition, we replace each $S\in \cF^{t-1}$ of dimension $s > 0$ which
contains some infeasible configuration by $S_1, \dotsc, S_d$
according to the Lemma~\ref{lem:hydra}. The following observation follows
easily from Lemma~\ref{lem:hydra}.

\begin{observation}\label{lem:Ft_feas}
Let us consider a phase with requests $r^1, \dotsc, r^\ell$.
A configuration $c$ is feasible
with respect to the requests $r^1, \dotsc, r^t$ if and only if
$c$ belongs to some space in $\cF^t$.
\end{observation}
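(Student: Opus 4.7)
The plan is to prove the observation by induction on $t$, carefully tracking how $\cF^t$ is obtained from $\cF^{t-1}$ by the update rule, and invoking Lemma~\ref{lem:hydra} at each step.

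For the base case $t=1$, recall $\cF^1 = \{(r^1_1,*,\dotsc,*),\dotsc,(*,\dotsc,*,r^1_k)\}$. A configuration $c$ is feasible for $r^1$ iff $c_i = r^1_i$ for some $i$, which by the definition of $S(v)$ is exactly the condition that $c$ lies in the $i$-th space listed in $\cF^1$.

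For the inductive step, assume the claim for $t-1$. I would prove both implications in parallel. Suppose first that $c$ is feasible for $r^1,\dotsc,r^t$. By induction $c\in S$ for some $S\in\cF^{t-1}$. If $S$ was retained unchanged in $\cF^t$, then $c\in S\in\cF^t$ and we are done. Otherwise $S$ either had dimension $0$ with its unique configuration infeasible for $r^t$ (impossible, as that configuration is $c$, which is feasible by assumption), or $S$ had positive dimension and contained some infeasible configuration; in the latter case $S$ was replaced by the subspaces $S_1,\dotsc,S_d$ given by Lemma~\ref{lem:hydra}, so that $S\cap F(r^t) = S_1\cup\dotsb\cup S_d$. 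Since $c\in S\cap F(r^t)$, it lies in some $S_i\in\cF^t$, as required.

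Conversely, suppose $c\in S'$ for some $S'\in\cF^t$. Either $S'$ was inherited unchanged from $\cF^{t-1}$, in which case the update rule guarantees every configuration of $S'$ is feasible for $r^t$, and by induction $c$ is feasible for $r^1,\dotsc,r^{t-1}$; or $S'$ is one of the subspaces $S_i$ produced from some $S\in\cF^{t-1}$ by Lemma~\ref{lem:hydra}. In that case $S'\subseteq S$, so $c\in S\in\cF^{t-1}$ gives feasibility for $r^1,\dotsc,r^{t-1}$ by the inductive hypothesis, while $S'\subseteq F(r^t)$ (again by Lemma~\ref{lem:hydra}) gives feasibility for $r^t$. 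Combining the two directions completes the induction.

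There is no real obstacle here beyond bookkeeping: the only subtlety is ensuring that the update rule does not accidentally discard a space containing a surviving feasible configuration, and this is precisely what the ``$S\cap F(r^t) = S_1\cup\dotsb\cup S_d$'' equality from Lemma~\ref{lem:hydra} rules out. The dimension-$0$ case must be handled separately simply because Lemma~\ref{lem:hydra} is only stated for $d\ge 1$, but it reduces to a direct check on the unique configuration of the space.
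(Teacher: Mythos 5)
Your proof is correct and is exactly the routine induction the paper has in mind when it says the observation ``follows easily from Lemma~\ref{lem:hydra}'' (the paper gives no written proof at all). Your handling of the dimension-$0$ removal case and your use of the equality $S\cap F(r^t)=S_1\cup\dotsb\cup S_d$ in both directions are precisely the points that make the induction go through.
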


\paragraph{An alternative deterministic algorithm.}
Based on $\cF^t$, we can design an alternative deterministic algorithm that has a
competitive ratio of $3k!$. This is worse than Algorithm~\ref{alg:det1} but will
be very useful to obtain our randomized algorithm.
To serve a request at time $t$, it chooses some space $Q^t \in \cF^t$ and moves
to an arbitrary $q^t\in Q^t$.
Whenever $Q^{t-1}$ no more belongs to $\cF^t$,
it moves to another space $Q^t$ regardless whether $q^{t-1}$ stayed feasible or
not, see Algorithm~\ref{alg:det2} for details. While, this is not an optimal behaviour,
a primitive exploitation of the
structure of $\cF^t$ already gives a reasonably good algorithm.

\begin{algorithm2e}
\DontPrintSemicolon
at time $t$:\;
\ForEach(\tcp*[f]{update $\cF^t$ for $r^t$})%
	{\normalfont $S\in \cF^{t-1}$ containing some infeasible configuration}{
	replace $S$ by $S_1, \dotsc, S_d$ according to Lemma~\ref{lem:hydra}\;
}
\If(\tcp*[f]{start a new phase, if needed}){$\cF^t = \emptyset$}{
	Set $\cF^t = \{S((r^t_1,*,\dotsc,*)), \dotsc, S((*,\dotsc,*,r^t_k))\}$\;
}
\eIf(\tcp*[f]{serve the request}){$Q^{t-1}\in \cF^t$}{
	set $Q^t := Q^{t-1}$ and $q^{t} := q^{t-1}$\;
}{
	choose arbitrary $Q^t \in \cF^t$ and move to an arbitrary $q^t\in Q^t$\;
}
\caption{Alternative deterministic algorithm.}
\label{alg:det2}
\end{algorithm2e}

The following lemma bounds the maximum number of distinct spaces which can
appear in $\cF^t$ during one phase.
In fact, it already implies that the competitive ratio of
Algorithm~\ref{alg:det2} is at most
$k! \cdot \sum_{d=0}^{k-1} \frac1{d!} \leq 3 k!$.

\begin{lemma}\label{lem:Ft_size}
Let us consider a phase with requests $r^1, \dotsc, r^\ell$.
Then $\bigcup_{t=1}^{\ell} \cF^t$ contains at most
$k!/d!$ spaces of dimension $d$.
\end{lemma}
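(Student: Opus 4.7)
The plan is to prove the bound by downward induction on the dimension $d$, exploiting the fact that every space in $\bigcup_{t=1}^{\ell} \cF^t$ either appears at initialization or is born from splitting a strictly higher-dimensional parent via Lemma~\ref{lem:hydra}.

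For the base case $d = k-1$, observe that the initial set $\cF^1$ contains exactly $k$ spaces, each of dimension $k-1$ (one for each coordinate in which the request $r^1$ is pinned). The update rule in Algorithm~\ref{alg:det2} only replaces a space by strictly lower-dimensional subspaces and never introduces new spaces of dimension $k-1$ after initialization. Hence $\bigcup_{t=1}^{\ell} \cF^t$ contains at most $k = k!/(k-1)!$ spaces of dimension $k-1$.

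For the inductive step, suppose the bound holds for dimension $d+1$, i.e., at most $k!/(d+1)!$ distinct spaces of dimension $d+1$ ever appear in $\bigcup_{t=1}^{\ell} \cF^t$. For any $d < k-1$, I would argue that every dimension-$d$ space $S'$ appearing at some time $t$ must have been created by the splitting step applied to a dimension-$(d+1)$ space $S$ at some earlier time: spaces enter $\cF^t$ only via initialization (which produces only dimension-$(k-1)$ spaces) or via the Lemma~\ref{lem:hydra} split (which drops dimension by exactly one). By Lemma~\ref{lem:hydra}, a single parent of dimension $d+1$ produces exactly $d+1$ children of dimension $d$ when it is split. Therefore the number of distinct dimension-$d$ children is at most the total number of (parent, child) pairs, which is at most $(d+1) \cdot k!/(d+1)! = k!/d!$, completing the induction.

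The main thing to be careful about is the bookkeeping: we want to count \emph{distinct} spaces, while the same dimension-$d$ space could in principle be produced by multiple different parents, or the same parent could be split (and thereby removed) only once during the phase. Both cases are harmless since counting (parent, produced child) pairs provides an upper bound on the number of distinct children, and a parent is split at most once across the phase (once split it is removed from $\cF^t$). No further ingredients beyond Lemma~\ref{lem:hydra} and the structure of Algorithm~\ref{alg:det2} are needed.
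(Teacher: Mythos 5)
Your proof is correct and follows essentially the same route as the paper: induction on the dimension, with the base case being the $k$ spaces of dimension $k-1$ in $\cF^1$, and the inductive step bounding the number of dimension-$d$ spaces by the number of (parent, child) pairs arising from splits of dimension-$(d+1)$ spaces via Lemma~\ref{lem:hydra}. The bookkeeping remarks about distinctness match the paper's (implicit) reasoning, so there is nothing to add.
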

\begin{proof}
We proceed by induction on $d$.
In the beginning, we have $k = k!/(k-1)!$ spaces of dimension $k-1$ in
$\cF^1$ and, by Lemma~\ref{lem:hydra}, all spaces added later have strictly
lower dimension.

By the way $\cF^t$ is updated, each $(d-1)$-dimensional space is created
from some $d$-dimensional space already present in
$\bigcup_{t=1}^{\ell} \cF^t$.
By the inductive hypothesis, there could be at most
$k!/d!$ distinct $d$-dimensional spaces
and Lemma~\ref{lem:hydra} implies that each of
them creates at most $d$ distinct $(d-1)$-dimensional spaces.
Therefore, there can be at most $\frac{k!}{d!} d = \frac{k!}{(d-1)!}$ spaces of
dimension $d-1$ in $\bigcup_{t=1}^{\ell} \cF^t$.
\end{proof}

\paragraph{Randomized algorithm.}
Now we transform Algorithm~\ref{alg:det2} into a randomized one.
Let $m_t$ denote the largest dimension among all the spaces in $\cF^t$ and let
$\cM^t$ denote the set of spaces of dimension $m_t$ in $\cF^t$.

The algorithm works as follows:
Whenever moving, it picks a space $Q^t$ from $\cM^t$ uniformly at
random, and moves to some arbitrary $q^t \in Q^t$.
As the choice of $q^t$ is arbitrary, whenever some configuration from $Q^t$
becomes infeasible, the algorithm assumes that $q^t$ is infeasible as well\footnote{This is done to keep the calculations simple, as
the chance of $Q^t$ being removed from $\cF$ and $q^t$ staying feasible is
negligible when $k \ll n$.}.

\begin{algorithm2e}
\DontPrintSemicolon
at time $t$:\;
\ForEach(\tcp*[f]{update $\cF^t$ for $r^t$})%
	{\normalfont $S\in \cF^{t-1}$ containing some infeasible configuration}{
	replace $S$ by $S_1, \dotsc, S_d$ according to Lemma~\ref{lem:hydra}\;
}
\If(\tcp*[f]{start a new phase, if needed}){$\cF^t = \emptyset$}{
	Set $\cF^t = \{S((r^t_1,*,\dotsc,*)), \dotsc, S((*,\dotsc,*,r^t_k))\}$\;
}
\eIf(\tcp*[f]{serve the request}){$Q^{t-1}\in \cM^t$}{
	set $Q^t := Q^{t-1}$ and $q^{t} := q^{t-1}$\;
}{
	Choose a space $Q^t$ from $\cM^t$ uniformly at random\;
	Move to an arbitrary $q^t \in Q^t$\;
}
\caption{Randomized Algorithm for Uniform metrics.}
\label{alg:rand}
\end{algorithm2e}

At each time $t$, $\ALG$ is located at some configuration $q^t$ contained in
some space in $\cF^t$ which implies that its position is feasible with respect to
the current request $r^t$, see Lemma~\ref{lem:Ft_feas}.
Here is the key property about the state of $\ALG$.

\begin{lemma}\label{lem:prob_dist}
At each time $t$, the probability of $Q^t$ being equal to some fixed $S \in
\cM^t$ is $1/|\cM^t|$.
\end{lemma}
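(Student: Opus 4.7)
The plan is to prove Lemma~\ref{lem:prob_dist} by induction on $t$. The base case covers both $t=1$ and any time when a new phase starts at $t$: in both situations $\cF^t$ has just been (re)initialised to the $k$ spaces of dimension $k-1$, so $\cM^t=\cF^t$, and since $Q^{t-1}$ (when it exists) has dimension strictly less than $k-1$ by the updating rule of the previous step, we have $Q^{t-1}\notin \cM^t$; hence the algorithm samples $Q^t$ uniformly from $\cM^t$ by construction.

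For the inductive step, assume $Q^{t-1}$ is uniform on $\cM^{t-1}$. I would split according to how $m_t$ compares to $m_{t-1}$. The key structural fact is that Lemma~\ref{lem:hydra} strictly lowers dimension: a split of a space of dimension $s$ only produces spaces of dimension $s-1$. Consequently, $m_t\le m_{t-1}$ always, and $m_t=m_{t-1}$ occurs exactly when at least one $S\in\cM^{t-1}$ survives the update unsplit. In this first case, no new space of dimension $m_t$ can be created at step $t$, so $\cM^t\subseteq\cM^{t-1}$, and $\cM^t$ consists precisely of the members of $\cM^{t-1}$ that contained no infeasible configuration w.r.t.~$r^t$.

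In the case $m_t=m_{t-1}$, fix $S\in\cM^t$. The algorithm sets $Q^t=S$ either by keeping $Q^{t-1}$ when $Q^{t-1}=S$, or by resampling uniformly from $\cM^t$ when $Q^{t-1}\notin\cM^t$; the event $Q^{t-1}\in\cM^{t-1}\setminus\cM^t$ contributes zero probability to $\{Q^t=S\}$. Therefore
\[
\Pr[Q^t=S]=\Pr[Q^{t-1}=S]+\Pr[Q^{t-1}\notin\cM^t]\cdot\frac{1}{|\cM^t|}=\frac{1}{|\cM^{t-1}|}+\frac{|\cM^{t-1}|-|\cM^t|}{|\cM^{t-1}|}\cdot\frac{1}{|\cM^t|}=\frac{1}{|\cM^t|},
\]
using the inductive hypothesis. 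In the remaining case $m_t<m_{t-1}$, every space of $\cM^{t-1}$ must have been split (otherwise some space of dimension $m_{t-1}$ would survive in $\cF^t$, contradicting $m_t<m_{t-1}$); in particular $Q^{t-1}$ was split and so $Q^{t-1}\notin\cF^t\supseteq\cM^t$. The algorithm then resamples $Q^t$ uniformly from $\cM^t$ directly, giving the desired distribution.

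The main thing to be careful about is not mis-handling the case $m_t<m_{t-1}$: one has to observe that every top-dimensional space was necessarily split (so $Q^{t-1}$ in particular leaves $\cF^t$), which is exactly where the dimension-lowering property of Lemma~\ref{lem:hydra} is used. The algebraic identity in the $m_t=m_{t-1}$ case is a one-line calculation, so I expect the whole proof to be short once the two cases are set up cleanly.
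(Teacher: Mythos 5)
Your proof is correct and follows essentially the same route as the paper: induction on $t$, using that splits strictly lower dimension so $\cM^t\subseteq\cM^{t-1}$ when $m_t=m_{t-1}$, and that the algorithm resamples uniformly whenever $Q^{t-1}$ leaves $\cM^t$. Your version is somewhat more explicit (the total-probability computation in the $m_t=m_{t-1}$ case and the verification that $Q^{t-1}\notin\cM^t$ when $m_t<m_{t-1}$), where the paper argues by conditioning on whether the algorithm moved, but the underlying argument is the same.
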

\begin{proof}
If $\ALG$ moved at time $t$, the statement follows trivially, since $Q^t$ was
chosen from $\cM^t$ uniformly at random.
So, let us condition on the event that $Q^t= Q^{t-1}$. 

Now, the algorithm does not change state if and only if $Q^{t-1} \in \cM^{t}$.
Moreover, in this case $m_t$ does not change, and $\cM^t \subset \cM^{t-1}$.
By induction, $Q^{t-1}$ is distributed uniformly within $\cM^{t-1}$,
and hence conditioned on $Q^{t-1} \in \cM^t$, $Q^{t}$ is uniformly distributed within $\cM^t$.
\end{proof}

\paragraph{Proof of Theorem~\ref{thm:alg_rand}.}
At the end of each phase (except possibly for the last unfinished phase), the set of feasible states 
$\cF^t = \emptyset$, and hence $\OPT$ must pay at least $1$ during each of those
phases.
Denoting $N$ the number of phases needed to serve the entire request sequence,
we have $\cost(\OPT) \geq (N-1)$.
On the other hand, the expected online cost is at most,
\[ E[\cost(\ALG)] \leq c(N-1) + c \leq c \cost(\OPT) + c, \]
where $c$ denotes the expected cost of $\ALG$ in one phase.
This implies that $\ALG$ is $c$-competitive, and strictly $2c$-competitive (as the offline must move at least once, if the online algorithm pays a non-zero cost).

Now we prove that $c$ is at most
$O(k^3 \log k)$.  To show this, we use a potential function
\[ \Phi(t) = H(|\cM^t|) + \sum_{d=0}^{m_t-1} H(k!/d!), \]
where $H(n)$ denotes the $n$th harmonic number.
As the beginning of the phase, 
$\Phi(1) \leq k H(k!) \leq k(\log k!+1) = O(k^2 \log k)$ as $|\cM^{1}|\leq k!$ and $m_{1} \leq k-1$. Moreover the phase ends whenever
$\Phi(t)$ decreases to 0.
Therefore, it is enough to show that, at each time $t$,
the expected cost incurred by the algorithm is at most
$k$ times the decrease of the potential.
We distinguish two cases.

If $m_t = m_{t-1}$, let us denote $b = |\cM^{t-1}| - |\cM^t|$.
If $b > 0$, the potential decreases, and its change can be bounded as
\[
\Delta\Phi \leq H(|\cM^t|) - H(|\cM^{t-1}|)
	= - \frac1{|\cM^t|+1} - \frac1{|\cM^t|+2}
		- \dotsb - \frac1{|\cM^t|+b}
	\leq -b \cdot \frac1{|\cM^{t-1}|}.
\]
On the other hand, the expected cost of $\ALG$ is at most $k$ times
the probability that it has to move, which is exactly
$P[A_t \in \cM^{t-1} \setminus \cM^t] = b/|\cM^{t-1}|$ using
Lemma~\ref{lem:prob_dist}. Thus the expected cost of the algorithm is at most
$k\cdot b/|\cM^{t-1}|$, which is at most $k\cdot (-\Delta\Phi)$.

In the second case, we have $m_t < m_{t-1}$.
By Lemma~\ref{lem:Ft_size}, we know that $|\cM^t| \leq k!/m_t!$ and hence
\[
\Delta\Phi = \Phi(t)-\Phi(t-1) =  H(|\cM^t|) -H(|\cM^{t-1}|) - H(k!/m_t!)  \leq  -H(|\cM^{t-1}|) \leq -1,
\]
since
$|\cM^{t-1}| \geq 1$ and therefore $H(|\cM^{t-1}|) \geq 1$.
As the expected cost incurred by the algorithm is at most $k$, this is at most $k\cdot (-\Delta\Phi)$.
\qed

\section{Algorithm for weighted uniform metrics}
\label{sec:alg_weighted}
 In this section we prove Theorem~\ref{thm:weight_comp}. Our algorithm is a natural extension of the algorithm of Fiat and Ricklin \cite{FR94} for the weighted $k$-server problem on uniform metrics.

\paragraph{High-level idea.} 
The algorithm is defined by a recursive construction based on the following idea.
First, we can assume that the weights of the metric spaces are highly separated, i.e., $w_1 \ll w_2 \ll \dotsc \ll w_k$ (if they are not we can make them separated while losing some additional factors). So in any reasonable solution, the server $s_k$ lying in metric $M_k$ should move much less often than the other servers. 
For that reason, the algorithm moves $s_k$ only when the accumulated cost of the other $k-1$ servers reaches $w_k$. Choosing where to move $s_k$ turns out to be a crucial decision. For that reason, (in each ``level $k$-phase") during the first part of the request sequence when
the algorithm only uses $k-1$ servers, it counts how many times each point of $M_k$ is requested. We call this ``learning subphase''. Intuitively, points of $M_k$ which are requested a lot are ``good candidates'' to place $s_k$. Now, during the next $c(k)$ (to be defined later) subphases, $s_k$ visits the $c(k)$ most requested points. This way, it visits all ``important locations'' of $M_k$.
A similar strategy is repeated recursively using $k-1$ servers within each subphase.

\paragraph{Notation and Preliminaries.} We denote by $s_i^{\ALG}$ and $s_i^{\ADV}$ the server of the algorithm (resp.~adversary) that lies in metric space $M_i$. Sometimes we drop the superscript and simply use $s_i$ when the context is clear.
We set $R_k := 2^{2^{k+2}} $ and $c(k) := 2^{2^{k+1}-3}$.  Note that $c(1) = 2$ and that for all $i$,
 \begin{equation}
 \label{eq:ci_ineq}
 4 (c(i)+1) \cdot c(i) \leq 8 c(i)^2 = c(i+1). 
 \end{equation} 
Moreover, for all $i \geq 2$, we have
 \begin{equation}
 \label{eq:ri_ineq}
R_i = 8 \cdot c(i) \cdot R_{i-1}.  
 \end{equation}
 
We assume (by rounding the weights if necessary) that $w_1 = 1$ and that for $2 \leq i \leq k$, 
$w_i$ is an integral multiple of $2 (1 + c(i-1)) \cdot w_{i-1}$. Let
$m_i$ denote the ratio $w_i/(2 (1 + c(i-1)) \cdot w_{i-1})$. 

The rounding can increase the weight of each server at most by a factor of $4^{k-1} c(k-1) \cdot \dotsc \cdot c(1) \leq R_{k-1}$. So, proving a competitive ratio $R_k$ for an instance with rounded weights will imply a competitive ratio $R_k \cdot R_{k-1} < (R_k)^2$ for arbitrary weights. 

Finally, we assume that in every request $\ALG$ needs to move a server. This is without loss of generality: requests served by the algorithm without moving a server do not affect its cost and can only increase the optimal cost. This assumption will play an important role in the algorithm below.

\subsection{Algorithm Description}
The algorithm is defined recursively, where $\ALG_i$ denotes the algorithm using servers $s_1,\dotsc,s_i$. An execution of $\ALG_i$ is divided into phases. The phases are independent of each other and the overall algorithm is completely determined by describing how each phase works. We now describe the phases.

$\ALG_1$ is very simple; given any request, $\ALG_1$ moves the server to the requested point. For purposes of analysis, we divide the execution of $\ALG_1$ into phases, where each phase consists of $2(c(1)+1) = 6$ requests. 

 \vspace{2mm}

\begin{algorithm2e}[H]
\SetAlgoRefName{}
\SetAlgorithmName{Phase of $\ALG_1$}{}{}
\caption{ }
\For{$j =1$ \textbf{to} $2(c(1)+1)$} {
    Request arrives to point $p$: Move $s_1$ to $p$.
}
Terminate Phase
\end{algorithm2e}

 \vspace{2mm}

We now define a phase of $\ALG_i$ for $i \geq 2$. Each phase of $\ALG_i$ consists of exactly $c(i) + 1$ subphases. The first subphase within a phase is special and we call it the \textit{learning subphase}. During each subphase we execute $\ALG_{i-1}$ until the cost incurred is exactly $w_i$.

During the learning subphase, for each point $p \in M_i$, $\ALG_i$ maintains a count $m(p)$ of the number of requests $r$ where $p$ is requested in $M_i$, i.e. $r(i) = p$. Let us order the points of $M_i$ as $p_1, \dotsc, p_n$ such that $ m(p_1) \geq \dotsc \geq m(p_n) $ (ties are broken arbitrarily). We assume that $|M_i| \geq c(i)$ (if $M_i$ has fewer points, we add some dummy points that are never requested). Let $P$ be the set of $c(i)$ most requested points during the learning subphase, i.e. $P = \lbrace p_1 , \dotsc, p_{c(i)} \rbrace $. 

For the rest of the phase $\ALG_i$ repeats the following $c(i)$ times: it moves $s_i$ to a point $p \in P$ that it has not visited during this phase, and starts the next subphase (i.e.~it calls $\ALG_{i-1}$ until its cost reaches $w_i$).
The figure below shows the algorithm.

 \vspace{2mm}

\begin{algorithm2e}[H]
\SetAlgoRefName{}
\SetAlgorithmName{Phase of $\ALG_i$, $i \geq 2$}{}{}
\caption{\ }
Move $s_i$ to an arbitrary point of $M_i$\;
Run $\ALG_{i-1}$ until cost incurred equals $w_i$ 
\tcp*{Learning subphase}   
For $p \in M_i $, $m(p) \gets \#$ of requests such that $r(i) = p$\tcp*{Assume $ m(p_1) \geq \dotsc \geq m(p_n) $}
$P \gets  \lbrace p_1 , \dotsc, p_{c(i)}  \rbrace $\; 
\For{$j =1$ \textbf{to} $c(i)$} {
    Move $s_i$ to an arbitrary point $p \in P$\;
    $P \gets P - p$\;
	Run $\ALG_{i-1}$ until cost incurred equals $w_i$
	\tcp*{$(j+1)$th subphase}     
}
Terminate Phase
\end{algorithm2e}

 \vspace{2mm}

\subsection{Analysis}

We first note some basic properties that follow directly by the construction of the algorithm. Call a phase of $\ALG_i$, $i \geq 2$ \textit{complete}, if all its subphases are finished. Similarly, a phase of $\ALG_1$ is complete if it served exactly 6 requests.

\begin{observation}
\label{obs:sub} For $i\geq 2$,
 a complete phase of $\ALG_i$ consists of $(c(i)+1)$ subphases.
\end{observation}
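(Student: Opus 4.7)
The plan is essentially to read off the claim directly from the pseudocode of Phase of $\ALG_i$ for $i \geq 2$. The proof is a simple counting argument and requires no nontrivial ingredients.

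First I would recall the definition of a subphase: each subphase corresponds to one invocation of $\ALG_{i-1}$ that is run until its accumulated cost equals exactly $w_i$. I would then inspect the algorithm description line by line. The first call to $\ALG_{i-1}$ (the learning subphase) occurs immediately after $s_i$ is moved to an arbitrary point; this accounts for exactly one subphase. After this, the algorithm enters a \texttt{for} loop that iterates $j = 1, \dots, c(i)$ times, and within each iteration it invokes $\ALG_{i-1}$ once and runs it until the accumulated cost equals $w_i$. Thus the loop contributes exactly $c(i)$ additional subphases.

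Summing these contributions gives $1 + c(i) = c(i) + 1$ subphases, provided the phase is complete, i.e., none of these invocations was truncated by the end of the request sequence. This is precisely the definition of a complete phase.

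The only potential subtlety, and the one point I would verify carefully, is that each invocation of $\ALG_{i-1}$ really does terminate with accumulated cost exactly $w_i$ and not more. This uses the rounding assumption that $w_i$ is an integral multiple of $2(1+c(i-1)) \cdot w_{i-1}$, together with the assumption that every request forces $\ALG$ to move a server, so that the cost of $\ALG_{i-1}$ grows in discrete steps compatible with a clean cutoff at $w_i$. Given these standing assumptions from the preliminaries, the observation follows immediately. There is no real obstacle beyond making the indexing explicit.
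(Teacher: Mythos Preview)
Your proposal is correct and matches the paper's approach: the paper states this observation without proof, simply noting that it follows directly from the construction of the algorithm, which is exactly the line-by-line counting of the learning subphase plus the $c(i)$ loop iterations that you carry out. Your additional remark about the cost cutoff being exactly $w_i$ is not needed for this particular observation (it is used for Observation~\ref{obs:cost_sub} and Lemma~\ref{lem:basic-prop}), but it does no harm.
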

\begin{observation}
\label{obs:cost_sub}
For $i\geq 2$, the cost incurred to serve all the requests of a subphase of $\ALG_i$ is $w_i$.
\end{observation}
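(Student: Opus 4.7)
I will prove Observation~\ref{obs:cost_sub} by strong induction on $i$, maintaining in parallel the auxiliary claim that a complete phase of $\ALG_i$ has cost exactly $2(c(i)+1)\, w_i$ for every $i \geq 1$; the two statements feed into each other naturally, since a phase of $\ALG_i$ is built out of subphases whose cost is the object of Observation~\ref{obs:cost_sub}.

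The base case $i=1$ is immediate: a complete phase of $\ALG_1$ consists of $2(c(1)+1)=6$ requests, each of which moves $s_1$ at cost $w_1=1$, so the phase cost is $6 w_1 = 2(c(1)+1)\, w_1$ as required. For the inductive step at $i \geq 2$, assume both claims at index $i-1$. A subphase of $\ALG_i$ is defined as running $\ALG_{i-1}$ until the cumulative cost first equals $w_i$. By the auxiliary inductive hypothesis, each complete phase of $\ALG_{i-1}$ contributes exactly $2(c(i-1)+1)\, w_{i-1}$ to the running cost; and by the weight-rounding convention, $w_i = m_i \cdot 2(1+c(i-1))\, w_{i-1}$ for a positive integer $m_i$. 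Hence after exactly $m_i$ complete phases of $\ALG_{i-1}$ the cumulative cost is precisely $w_i$, which establishes Observation~\ref{obs:cost_sub} at index $i$. The auxiliary claim at $i$ then drops out via Observation~\ref{obs:sub}: a phase of $\ALG_i$ contains $c(i)+1$ subphases of cost $w_i$ each, together with $c(i)+1$ explicit moves of $s_i$ of cost $w_i$ each, totalling $2(c(i)+1)\, w_i$.

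The one delicate point is confirming that the cumulative cost lands on $w_i$ \emph{exactly}, rather than overshooting part-way through a phase of $\ALG_{i-1}$. To see this, note that the cost function of $\ALG_{i-1}$ is strictly monotone in time and gains exactly $2(c(i-1)+1)\, w_{i-1}$ over each of its phases; so immediately before the $m_i$-th phase of $\ALG_{i-1}$ starts, the cumulative cost equals $w_i - 2(c(i-1)+1)\, w_{i-1} < w_i$, and during that phase the cost rises through positive increments whose partial sums never exceed the total per-phase increment $2(c(i-1)+1)\, w_{i-1}$. Consequently $w_i$ is first reached exactly at the final move of the $m_i$-th phase, and the subphase of $\ALG_i$ terminates cleanly at a phase boundary of $\ALG_{i-1}$. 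This alignment is precisely why the rounding convention forcing $w_i$ to be an integral multiple of $2(1+c(i-1))\, w_{i-1}$ was imposed at the start of the section; without it, the subphase could stop mid-phase of $\ALG_{i-1}$ and the clean $w_i$-accounting used throughout the competitive analysis would break.
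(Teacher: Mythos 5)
Your proof is correct, but it does more work than the paper, which states this observation without proof: a subphase of $\ALG_i$ is \emph{defined} as ``run $\ALG_{i-1}$ until the cost incurred equals $w_i$,'' so the claim holds essentially by fiat. What you have actually done is verify that this definition is well-posed --- that the cumulative cost of $\ALG_{i-1}$ lands on $w_i$ exactly rather than jumping past it --- and you do so by a joint induction that simultaneously establishes Corollary~\ref{cor:cost_phase} (a complete phase of $\ALG_i$ costs $2(c(i)+1)w_i$) and part (1) of Lemma~\ref{lem:basic-prop} (a subphase of $\ALG_i$ consists of exactly $m_i$ complete phases of $\ALG_{i-1}$). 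The paper reaches the same facts in the opposite order: it takes Observation~\ref{obs:cost_sub} as given, derives Corollary~\ref{cor:cost_phase} from it, and only then uses the rounding convention $w_i = m_i \cdot 2(1+c(i-1))w_{i-1}$ in Lemma~\ref{lem:basic-prop} to count phases. Your route buys a clean justification of the stopping rule (the paper leaves implicit that the subphase terminates at a phase boundary of $\ALG_{i-1}$ and never overshoots), at the price of front-loading the inductive bookkeeping; the paper's route keeps the observation trivial and defers the alignment argument to where it is actually needed. One small caveat: your monotonicity argument relies on the standing assumption that every request forces $\ALG$ to move a server, which is what guarantees each increment is strictly positive --- worth citing explicitly, since it is an assumption imposed in the preliminaries rather than an automatic property.
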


These observations give the following corollary.
\begin{corol}
\label{cor:cost_phase}
For $i\geq 1$, the cost incurred by $\ALG_i$ to serve requests of a phase is $2 (c(i)+1) w_i$.
\end{corol}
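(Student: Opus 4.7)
The plan is a direct accounting of the two sources of cost in a phase of $\ALG_i$: the moves of the top-level server $s_i$ itself, and the costs paid inside the recursive calls to $\ALG_{i-1}$. I would handle the base case $i=1$ separately and then proceed uniformly for $i \geq 2$.

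For $i=1$, I would simply note that a phase of $\ALG_1$ consists of $2(c(1)+1)=6$ requests, each of which, by construction, causes $s_1$ to move to the requested point in $M_1$, incurring cost $w_1$ each. Summing gives $2(c(1)+1)w_1$ exactly.

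For $i \geq 2$, I would split the cost into two contributions. First, server $s_i$ is moved explicitly at exactly $c(i)+1$ moments within a complete phase: once at the very beginning (before the learning subphase) and once at the start of each of the $c(i)$ iterations of the \texttt{for} loop. Each such move costs $w_i$, contributing $(c(i)+1)w_i$ in total. Second, by Observation~\ref{obs:sub} a complete phase contains exactly $c(i)+1$ subphases (the learning subphase plus the $c(i)$ subphases triggered inside the loop), and by Observation~\ref{obs:cost_sub} each subphase, which consists of running $\ALG_{i-1}$ until its cost equals $w_i$, contributes exactly $w_i$. Summing this contribution gives another $(c(i)+1)w_i$.

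Adding the two contributions yields $2(c(i)+1)w_i$, as claimed. There is essentially no obstacle: the statement is a straightforward bookkeeping consequence of the algorithm's description together with the two preceding observations, and the only thing to be slightly careful about is not double-counting the move of $s_i$ versus the cost tracked inside the corresponding subphase (the subphase's $w_i$ counts only the work done by $\ALG_{i-1}$, i.e.\ by servers $s_1,\dots,s_{i-1}$, and is disjoint from the explicit moves of $s_i$).
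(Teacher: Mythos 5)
Your proof is correct and follows essentially the same accounting as the paper's: $c(i)+1$ subphases, each preceded by a move of $s_i$ costing $w_i$ and each contributing a further $w_i$ of $\ALG_{i-1}$ cost (Observations~\ref{obs:sub} and~\ref{obs:cost_sub}), plus the direct count of $6$ unit-cost moves for the base case. Your explicit remark that the subphase cost counts only the work of $s_1,\dots,s_{i-1}$ and is disjoint from the moves of $s_i$ is a nice clarification but does not change the argument.
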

\begin{proof}
For $i=1$ this holds by definition of the phase. For $i \geq 2$, a phase consists of $(c(i)+1)$ subphases. Before each subphase $\ALG_i$ moves server $s_i$, which costs $w_i$, and moreover $\ALG_{i-1}$ also incurs cost $w_i$.
\end{proof}
Using this, we get the following two simple properties. 

\begin{lemma}
\label{lem:basic-prop}
By definition of $\ALG$, the following properties hold:
\begin{enumerate}
\item A subphase of $\ALG_i$, $i \geq 2$, consists of $m_i$ complete phases of $\ALG_{i-1}$.
\item All complete phases of $\ALG_i$, $i \geq 1$, consist of the same number of requests.
\end{enumerate}
\end{lemma}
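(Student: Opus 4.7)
\medskip
\noindent\textbf{Proof plan.} I would prove both parts simultaneously by induction on $i$. The base case $i=1$ takes care of claim (2) immediately: by construction, every complete phase of $\ALG_1$ consists of exactly $2(c(1)+1)=6$ requests, so they all have the same length; claim (1) is vacuous for $i=1$.

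For the inductive step with $i \geq 2$, I would first establish claim (1). A subphase of $\ALG_i$ runs $\ALG_{i-1}$ until the cost incurred equals $w_i$. By the induction hypothesis for claim (2) together with Corollary~\ref{cor:cost_phase}, every complete phase of $\ALG_{i-1}$ incurs exactly $C_{i-1} := 2(c(i-1)+1)\,w_{i-1}$ cost, so after $j$ complete phases of $\ALG_{i-1}$ the cumulative cost is $j\,C_{i-1}$. By the rounding assumption, $m_i = w_i/C_{i-1}$ is a positive integer, so after $m_i$ complete phases the cost equals exactly $w_i$. The delicate point is to rule out that the subphase terminates in the \emph{interior} of a complete phase of $\ALG_{i-1}$. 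This follows from two observations: (a) since we assume every request moves a server, each request incurs strictly positive cost, so the cumulative cost is strictly monotonically increasing; (b) a complete phase of $\ALG_{i-1}$ accumulates exactly $C_{i-1}$ cost in total, so mid-phase the intra-phase cumulative cost is strictly less than $C_{i-1}$. Combining (a) and (b) with the preceding counting, within the $m_i$-th phase of $\ALG_{i-1}$ the cumulative cost strictly grows from $(m_i-1)\,C_{i-1} = w_i - C_{i-1}$ up to $w_i$, hitting $w_i$ precisely at the final request of that phase.

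For claim (2), I would combine claim (1) with the induction hypothesis that every complete phase of $\ALG_{i-1}$ consists of the same number of requests, call it $N_{i-1}$. By claim (1), each subphase of $\ALG_i$ then comprises exactly $m_i\,N_{i-1}$ requests. A complete phase of $\ALG_i$ consists of $c(i)+1$ subphases, each preceded by the move of $s_i$ which, by the WLOG assumption that every request moves a server, accounts for exactly one request. Hence a complete phase of $\ALG_i$ consists of $(c(i)+1)(m_i\,N_{i-1}+1)$ requests, a quantity depending only on $i$. The only real obstacle is the verification in claim (1) that the cost threshold $w_i$ is reached precisely at a phase boundary of $\ALG_{i-1}$; once that is settled, claim (2) is pure bookkeeping.
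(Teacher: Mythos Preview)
Your approach is essentially the same as the paper's: induction on $i$, using Corollary~\ref{cor:cost_phase} and the integrality of $m_i$ to establish claim (1), and then deducing claim (2) by counting. You are in fact more careful than the paper about verifying that the cost threshold $w_i$ is reached exactly at a phase boundary of $\ALG_{i-1}$ and not mid-phase; the paper leaves this implicit.

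One minor slip: the move of $s_i$ at the start of each subphase is an action of the algorithm, not a response to a request. The WLOG assumption says that every \emph{request} forces a server move, not the converse, so this move does not ``account for exactly one request''. The correct request count of a complete phase of $\ALG_i$ is $(c(i)+1)\,m_i\,N_{i-1}$, not $(c(i)+1)(m_i\,N_{i-1}+1)$. This does not affect the validity of claim~(2), since either expression depends only on $i$.
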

\begin{proof}
 The first property uses the rounding of the weights. By Corollary \ref{cor:cost_phase},
each phase of $\ALG_{i-1}$ costs $2 (c(i-1)+1) w_{i-1}$ and, in each subphase of $\ALG_i$, the cost incurred by $\ALG_{i-1}$ is $w_i$. So there are exactly $w_i/(2 (c(i-1)+1) w_{i-1}) = m_i$ phases of $\ALG_{i-1}$. 

The property above, combined with Observation~\ref{obs:sub} implies that a complete phase of $\ALG_i$ contains $m_i  \cdot(c(i) +1)$ complete phases $\ALG_{i-1}$.  Now, the second property follows directly by induction: each phase of $\ALG_1$ consists of $2(c(1)+1) = 6$ requests, and each phase of $\ALG_i$ consists of $m_i (c(i) +1)$ phases of $\ALG_{i-1}$. 
\end{proof}

Consider a phase of $\ALG_i$. The next lemma shows that, for any point $p \in M_i$, there exists a subphase where it is not requested too many times. This crucially uses the assumption that $\ALG_i$ has to move a server in every request.

\begin{lemma}
\label{lem:point_frac}
Consider a complete phase of $\ALG_i$, $i \geq 2$. For any point $p \in M_i$, there exists a subphase such that at most $1/c(i)$ fraction of the requests have $r(i) = p$.
\end{lemma}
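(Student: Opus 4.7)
The plan is to combine two facts: (a) the working assumption that $\ALG$ moves a server on every request, which forces each request $r$ in a subphase to have $r(i)$ different from the current location of $s_i$ (since $s_i$ does not move within a subphase, if $r(i)$ equalled the location of $s_i$ then $s_i$ would already satisfy $r$ and no server would need to move); and (b) the definition of $P$ as the $c(i)$ most-requested $M_i$-points during the learning subphase.

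First I would apply Lemma~\ref{lem:basic-prop} to conclude that every subphase of the considered $\ALG_i$-phase contains the same number of requests, say $N$. Then I would fix $p \in M_i$ and split into two cases according to whether $s_i$ is ever located at $p$ during the phase. Over the phase $s_i$ occupies at most $c(i)+1$ distinct points: the arbitrary starting point $p_0$ used during the learning subphase, and the $c(i)$ points of $P$ visited in the remaining subphases.

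\textbf{Case 1: $s_i$ occupies $p$ during some subphase $j$.} By observation (a), every request in subphase $j$ has $r(i) \neq p$, so the fraction of requests with $r(i)=p$ in subphase $j$ is $0 \leq 1/c(i)$, and we are done.

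\textbf{Case 2: $s_i$ never occupies $p$.} Then $p \neq p_0$ and $p \notin P$. I claim the learning subphase itself gives the required bound. Let $m(q)$ be the count of requests in the learning subphase with $r(i)=q$. By (a) applied to the learning subphase, $m(p_0)=0$, and hence $\sum_{q \in M_i} m(q) = N$. Because $P$ contains the $c(i)$ most-requested points and $p \notin P$, we have $m(p) \leq m(q)$ for every $q \in P$, so
\[ c(i)\, m(p) \;\leq\; \sum_{q\in P} m(q) \;\leq\; N, \]
giving $m(p) \leq N/c(i)$. Thus the learning subphase is a subphase with at most a $1/c(i)$ fraction of requests satisfying $r(i)=p$.

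The only real obstacle is making the case split clean; in particular, one must be careful to use the moving assumption both to force the exclusion of any subphase where $s_i$ sits at the queried point (Case 1), and to guarantee $m(p_0)=0$ so that the top-$c(i)$ counting argument in Case 2 gives the full budget $N$ on the right-hand side.
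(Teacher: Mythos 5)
Your proof is correct and follows essentially the same two-case argument as the paper: the moving assumption rules out requests to the point currently occupied by $s_i$, and the top-$c(i)$ counting in the learning subphase handles points outside $P$. (Your remark that $m(p_0)=0$ is needed for the counting step is superfluous — $\sum_{q\in P} m(q) \le \sum_{q\in M_i} m(q) = N$ holds regardless — but this does not affect correctness.)
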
 

\begin{proof}
Let $P$ be the set of $c(i)$ most requested points of $M_i$ during the learning subphase. We consider two cases: if $p \in P$, there exists a subphase where $s_i^{\ALG}$ is located at $p$. During this subphase there are no requests such that $r(i) = p$, by our assumption that the algorithm moves some server at every request. 
Otherwise, if $p \notin P$, then during the learning subphase, the fraction of requests such that $r(i) = p$ is no more than $1 / c(i)$.
\end{proof}

To prove the competitiveness of $\ALG_k$ with respect to the optimal offline solution $\ADV_k$, the proof uses a subtle induction on $k$. 
Clearly, one cannot compare $\ALG_i$, for $i <k$ against $\ADV_k$, since the latter has more servers and its cost could be arbitrarily lower. So the idea is to compare $\ALG_i$ against $\ADV_i$, an adversary with servers $s_1,\dotsc,s_i$, while ensuring that $\ADV_i$ is an accurate estimate of $\ADV_k$ during time intervals when $\ALG_i$ is called by $\ALG_k$. To achieve this, the inductive hypothesis is required to satisfy certain properties described below. For a fixed phase, let $\cost(\ALG_i)$ and $\cost(\ADV_i)$ denote the cost of $\ALG_i$ and $\ADV_i$ respectively. 

\begin{enumerate}[(i)]
\item \textbf{Initial Configuration of $\ADV_i$.} Algorithm $\ALG_i$ (for $i<k$), is called several times during a phase of $\ALG_k$. 
As we don't know the current configuration of $\ADV_i$ each time $\ALG_i$ is called, we require that for every complete phase, $\cost(\ALG_i) \leq R_i \cdot \cost(\ADV_i)$, for any initial configuration of $\ADV_i$.

\item \textbf{Adversary can ignore a fraction of requests.} During a phase of $\ALG_i$, $\ADV_k$ may serve requests with servers $s_{i+1},\dotsc,s_k$, and hence the competitive ratio of $\ALG_i$ against $\ADV_i$ may not give any meaningful guarantee. To get around this, we will require that $\cost(\ALG_i) \leq R_i \cdot \cost(\ADV_i)$, even if the $\ADV_i$ ignores an $f(i):=4 / c(i+1)$ fraction of requests. This will allow us to use the inductive hypothesis for the phases of $\ALG_i$ where $\ADV_k$ uses servers $s_{i+1},\dotsc,s_k$ to serve at most $f(i)$ fraction of requests.

\end{enumerate}

  For a fixed phase, we say that $\ALG_i$ is strictly $R_i$-competitive against $\ADV_i$, if $\cost(\ALG_i) \leq R_i \cdot \cost(\ADV_i)$. The key result is the following.

\begin{theorem}
\label{thm:comp_ratio}
Consider a complete phase of $\ALG_i$. Let $\ADV_i$ be an adversary with $i$ servers that is allowed to choose any initial configuration and to ignore any $4/c(i+1)$ fraction of requests. Then, $\ALG_i$ is strictly $R_i$-competitive against $\ADV_i$.
\end{theorem}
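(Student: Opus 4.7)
The plan is induction on $i \geq 1$. For the base case $i=1$, a complete phase consists of $2(c(1)+1) = 6$ requests, each forcing $\ALG_1$ to move, so $\cost(\ALG_1) = 6 w_1$. Since $\ADV_1$ may ignore only $4/c(2) = 1/2$ of its requests and the standing assumption that $\ALG$ moves every round forces consecutive requests to differ, a short direct argument gives $\cost(\ADV_1) \geq w_1$, and the base case follows from $R_1 = 256 \geq 6$.

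For the inductive step at $i \geq 2$, Corollary~\ref{cor:cost_phase} gives $\cost(\ALG_i) = 2(c(i)+1)w_i$, so the goal is $\cost(\ADV_i) \geq 2(c(i)+1) w_i / R_i$. I would split on whether $\ADV_i$ moves its top server $s_i^{\ADV}$ during the phase. If yes, then $\cost(\ADV_i) \geq w_i$ and the bound follows from $R_i \geq 2(c(i)+1)$. Otherwise $s_i^{\ADV}$ stays fixed at some $p \in M_i$, and I invoke Lemma~\ref{lem:point_frac} to pick a ``good'' subphase $\sigma$ (of length $N$) in which at most a $1/c(i)$ fraction of requests have $r(i) = p$. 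By~\eqref{eq:ci_ineq}, the \emph{total} ignore budget $(4/c(i+1))(c(i)+1)N$ of $\ADV_i$ over the whole phase is at most $N/c(i)$; thus within $\sigma$ at most a $2/c(i)$ fraction of requests is either covered by $s_i^{\ADV}$ or ignored.

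By Lemma~\ref{lem:basic-prop}, $\sigma$ contains $m_i$ complete phases of $\ALG_{i-1}$. I would call an inner phase $\phi$ \emph{good} if at most a $4/c(i)$ fraction of its requests are covered by $s_i^{\ADV}$ or ignored by $\ADV_i$, and \emph{bad} otherwise. An averaging argument against the $2/c(i)$ bound shows that strictly fewer than $m_i/2$ inner phases are bad, so by Observation~\ref{obs:cost_sub} and Corollary~\ref{cor:cost_phase} the good phases together account for more than $w_i/2$ of the $\ALG_{i-1}$-cost inside $\sigma$. On each good $\phi$ the inductive hypothesis applies to $\ADV_i$ restricted to $s_1^{\ADV},\dotsc,s_{i-1}^{\ADV}$, viewed as an $\ADV_{i-1}$ that ``ignores'' the $s_i$-covered and originally-ignored requests of $\phi$ (a fraction of at most $4/c(i)$, matching the ignore parameter at level $i-1$). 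Summing,
\begin{equation*}
\tfrac{w_i}{2} \;<\; \sum_{\text{good }\phi} \cost(\ALG_{i-1},\phi) \;\leq\; R_{i-1} \sum_{\text{good }\phi} \cost(\ADV_{i-1},\phi) \;\leq\; R_{i-1} \cdot \cost(\ADV_i),
\end{equation*}
and~\eqref{eq:ri_ineq} combined with $c(i) \geq 1$ yields $R_i = 8c(i)R_{i-1} \geq 4(c(i)+1) R_{i-1}$, which is exactly enough to close the induction.

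The main obstacle is the double bookkeeping in Case~2: one must ensure via~\eqref{eq:ci_ineq} that the adversary's ignore budget, even when fully concentrated in the good subphase, cannot spoil the induction on a majority of its inner $\ALG_{i-1}$-phases, while simultaneously absorbing the ``uncontrolled'' cost of the bad inner phases into only half of $\sigma$'s total $\ALG_{i-1}$-cost $w_i$. The growth rates $c(i+1) = 8c(i)^2$ and $R_i = 8c(i) R_{i-1}$ are calibrated precisely so both halves of this accounting close.
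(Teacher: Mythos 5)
Your proof follows essentially the same route as the paper's: the same case split on whether $s_i^{\ADV}$ moves, the same use of Lemma~\ref{lem:point_frac} to find a good subphase, the same concentration of the ignore budget into that subphase via~\eqref{eq:ci_ineq}, the same averaging over the $m_i$ equal-length inner phases, and the same final comparison $4(c(i)+1)R_{i-1}\leq 8c(i)R_{i-1}=R_i$. The inductive step is correct as written.

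One slip in the base case: with $c(2)=2^{2^{3}-3}=32$ the ignore fraction is $4/c(2)=1/8$, not $1/2$. This matters, because with a $1/2$ ignore budget the ``short direct argument'' you invoke does not give $\cost(\ADV_1)\geq w_1$: the adversary could ignore $r_2,r_4,r_6$, start at $r_1$, and the surviving requests $r_3,r_5$ need not differ from $r_1$ (only \emph{consecutive} requests are guaranteed distinct), so $\ADV_1$ could pay $0$. With the correct value $1/8$ the adversary can ignore at most $\lfloor 6/8\rfloor=0$ requests and the base case is immediate (the paper even allows a $1/3$ fraction and argues that some $r_t$ with $t\geq 2$ has both $r_{t-1}$ and $r_t$ unignored, forcing cost $1$). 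So the base case needs the corrected arithmetic, after which everything closes.
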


Before proving this, let us note that this directly implies Theorem~\ref{thm:weight_comp}.
Indeed, for any request sequence $\sigma$, all phases except possibly the last one, are complete, so $\cost(\ALG_k) \leq R_k \cdot \cost(\ADV_k)$. The cost of $\ALG_k$ for the last phase, is at most $2 (c(k) + 1) w_k$, which is a fixed additive term independent of the length of $\sigma$. So, $\ALG_k(\sigma) \leq R_k \cdot \ADV_k(\sigma) + 2 (c(k) + 1) w_k $, and $\ALG_k$ is $R_k$-competitive. 
Together with loss in rounding the weights, this gives a competitive ratio of at mot $(R_k)^2 \leq 2^{2^{k+3}}$ for arbitrary weights.

\vspace{2mm}

We now prove Theorem~\ref{thm:comp_ratio}.
\begin{proof}[Proof of Theorem~\ref{thm:comp_ratio}]
 We prove the theorem by induction on $k$.

\vspace{2mm}

\par \textit{Base case ($i=1$):} As $R_1 > 6 $ and $ 4 / c(2) = 1/8 \leq 1/3$, it suffices to show here 
that $\ALG_1$ is strictly 6-competitive in a phase where $\ADV_1$ can ignore at most $1/3$ fraction of requests, for any starting point of $s_1^{\ADV_1}$. 
 By Lemma~\ref{lem:basic-prop}, we have $\cost(\ALG_1) =2 (c(1) + 1) = 6$. We show that $\cost(\ADV_1) \geq  1$. Consider two consecutive requests $r_{t-1},r_t$. By our assumption that $\ALG_1$ has to move its server in every request, it must be that $r_{t-1} \neq r_t$.
So, for any $t$ if $\ADV_1$ does not ignore both $r_{t-1}$ and $r_t$, then it must pay 1 to serve $r_t$. 
Moreover, as the adverary can chose the initial server location, it may (only) serve the first request at zero cost.  
As a phase consists of $6$ requests, $\ADV_i$ can ignore at most $6/3=2$ of them, so there are at most 4 requests that are either ignored or appear immediately after an ignored request. So among requests $r_2,\dotsc,r_6$, there is at least one request $r_t$, such that both $r_{t-1}$ and $r_t$ are not ignored. 

\vspace{2mm} 

\par \textit{Inductive step:} Assume inductively that $\ALG_{i-1}$ is strictly $R_{i-1}$-competitive against any adversary with $i-1$ servers that can ignore up to $4/c(i)$ fraction of requests. 

Let us consider some phase at level $i$, and let $I$ denote the set of requests that $\ADV_i$ chooses to ignore during the phase. 
We will show that $\cost(\ADV_i)\geq w_i / (2 R_{i-1})$. 
This implies the theorem, as $\cost(\ALG_i)=2 (c(i)+1) w_i$ by Lemma~\ref{lem:basic-prop} and hence,
\[
\frac{\cost(\ALG_i)}{\cost(\ADV_i)} \leq  \frac{2 (c(i)+1) w_i}{w_i / (2 R_{i-1})} = 4 (c(i)+1) R_{i-1} \leq 8 \cdot c(i) \cdot R_{i-1} =  R_i.\]

First, if $\ADV_i$ moves server $s_i$ during the phase, its cost is already at least $w_i$ and hence more than $w_i / (2 R_{i-1}) $. 
So we can assume that $s_i^{\ADV}$ stays fixed at some point $p \in M_i$ during the entire phase. So, $\ADV_i$ is an adversary that uses $i-1$ servers and can ignore all requests with $r(i) = p$ and the requests of $I$. We will show that there is a subphase where $\cost(\ADV_i) \geq w_i/(2 R_{i-1})$. 

By Lemma~\ref{lem:point_frac}, there exists a subphase, call it $j$, such that at most $1/c(i)$ fraction of the requests have $r(i) = p$. As all $c(i) +1$ subphases have the same number of requests (by Lemma~\ref{lem:basic-prop}),
even if all the requests of $I$ belong to subphase $j$, they make up at most $(4 \cdot (c(i)+1))/c(i+1) \leq 1/c(i)$ fraction of its requests, where the inequality follows from equation~\eqref{eq:ci_ineq}.  So overall during subphase $j$, $\ADV_i$ uses servers $s_1,\dotsc,s_{i-1}$ and ignores at most $2/c(i)$ fraction of requests.

We now apply the inductive hypothesis together with an averaging argument. As subphase $j$ consists of $m_i$ phases of $\ALG_{i-1}$, all of equal length, and $\ADV_i$ ignores at most $2/c(i)$ fraction of requests of the subphase, there are at most $m_i/2$ phases of $\ALG_{i-1}$ where it can ignore more than $4/c(i)$ fraction of requests. So, for at least $m_i/2$ phases of $\ALG_{i-1}$, $\ADV_i$ uses $i-1$ servers and ignores no more than $4/c(i)$ fraction of requests. By the inductive hypothesis, $\ALG_{i-1}$ is strictly $R_{i-1}$-competitive against $\ADV_i$ in these phases. As the cost of $\ALG_{i-1}$ for each phase is the same (by Lemma~\ref{lem:basic-prop}), overall $\ALG_i$ is strictly $2 R_{i-1}$ competitive during subphase $j$. As the cost of $\ALG_i$ during subphase $j$ is $w_i$, we get that $\cost(\ADV_i)\geq  w_i / 2 R_{i-1}$, as claimed.
\end{proof}

\section*{Acknowledgments}
We would like to thank Ren\'e Sitters for useful discussions on the generalized $k$-server problem.

\bibliographystyle{plain}
{\small \bibliography{references_gen_server} }
\newpage
\appendix

\section{Lower Bounds}
\label{sec:lower-bounds}
We present simple lower bounds on the competitive ratio of deterministic and randomized algorithms for the generalized $k$-server problem in uniform metrics. 

\paragraph{Deterministic Algorithms.}  
We show a simple construction due to \cite{KT04} that directly implies a $(2^k -
1) / k$ lower bound on the competitive ratio of deterministic algorithm.
Using a more careful argument, \cite{KT04} also improve this to $2^k-1$.

Assume that each metric space $M_i$ has $n=2$ points, labeled by $0$,$1$. A configuration of servers is a vector $c \in \lbrace 0,1 \rbrace^k$, so there are $2^k$ possible configurations. Now, a request $r =  (r_1,\dotsc,r_k)$ is unsatisfied if and only if the algorithm is in the antipodal configuration $\bar{r} = (1-r_1,\dotsc,1 -r_k)$. Let $\ALG$ be any online algorithm and $\ADV$ be the adversary. Initially, $\ALG$ and $\ADV$ are in the same configuration. At each time step, if the current configuration of $\ALG$ is $a = (a_1,\dotsc,a_k)$, the adversary requests $\bar{a}$ until $\ALG$ visits every configuration. If $p$ is the configuration that $\ALG$ visits last, the adversary can simply move to $p$ at the beginning, paying at most $k$, and satisfy all requests until $\ALG$ moves to $p$. On the other hand, $\ALG$ pays at least $2^k-1$ until it reaches $p$. Once $\ALG$ and $\ADV$ are in the same configuration, the strategy repeats.
 
\paragraph{Randomized Algorithms.} Viewing generalized $k$-server as a metrical service system (MSS), we can get a non-trivial lower bound for randomized algorithms. In particular, we can apply the $\Omega(\frac{\log N}{\log^2 \log N})$ lower bound due to Bartal et al.~\cite{BBM06} on the competitive ratio of any randomized online algorithm against oblivious adversaries, for any metrical task system on $N$ states. 
Of course, the MSS corresponding to a generalized $k$-server instance is restricted as the cost vectors may not be completely arbitrary. 
However, we consider the case where all metrics $M_i$ have $n=2$ points.
Let $s$ be an arbitrary state among the $N=2^k$ possible states.
A request in the antipodal point $\overline{s}$ only penalizes $s$
and has cost $0$ for every other state.
So the space of cost vectors here is rich enough to simulate any MSS on these
$N$ states\footnote{Note that if there is a general MSS request that has
infinite cost on some subset $S$ of states, then decomposing this into $|S|$
sequential requests where each of them penalizes exactly one state of $S$, can
only make the competitive ratio worse.}.

This implies a $\Omega (\frac{k}{\log^2 k}) $  lower bound for generalized $k$-server problem on uniform metrics. 

\end{document}